\newtheorem{theorem}{Theorem}[section]
\newcommand{\rk}{{\rm rank}}
\newcommand{\id}{\mathbb{I}}
\newtheorem{lemma}[theorem]{Lemma}
\newtheorem{definition}[theorem]{Definition}
\newtheorem{remark}[theorem]{Remark}
\newtheorem{corollary}[theorem]{Corollary}
\newcommand{\specialcell}[2][c]{%
  \begin{tabular}[#1]{@{}c@{}}#2\end{tabular}}
\DeclareMathOperator{\diag}{diag}
\newcommand{\rr}{\mathbb{R}}
\DeclareMathOperator{\tr}{tr}
\newcommand{\pp}{\mathbb{P}}
\begin{document}

\title{Limits to catalysis in quantum thermodynamics}
\author{N. H.Y. \surname{Ng}}
\affiliation{Centre for Quantum Technologies, National University of Singapore, 117543 Singapore}
\author{L. \surname{Man\v{c}inska}}
\affiliation{Centre for Quantum Technologies, National University of Singapore, 117543 Singapore}
\author{C. \surname{Cirstoiu}}
\affiliation{Centre for Quantum Technologies, National University of Singapore, 117543 Singapore}
\affiliation{Department of Applied Mathematics and Theoretical Physics, University of Cambridge, Cambridge, CB3 0WB United Kingdom}
\author{J. \surname{Eisert}}
\affiliation{Dahlem Center for Complex Quantum Systems, Freie Universit\"{a}t Berlin, 14195 Berlin, Germany}
\author{S. \surname{Wehner}}
\affiliation{Centre for Quantum Technologies, National University of Singapore, 117543 Singapore}
\affiliation{School of Computing, National University of Singapore, 117417 Singapore}

\begin{abstract}
Quantum thermodynamics is a research field that aims at fleshing out the ultimate limits of thermodynamic
processes in the deep quantum regime. A complete picture of quantum thermodynamics allows for catalysts, i.e.,
systems facilitating state transformations while remaining essentially intact in their state,
very much reminding of catalysts in chemical reactions. In this work, we present a comprehensive analysis of the power and limitation
of such thermal catalysis. Specifically, we provide a family of optimal catalysts that can be returned with
minimal trace distance error after facilitating a state transformation process. To incorporate the genuine physical
role of a catalyst, we identify very significant restrictions on arbitrary state transformations under dimension or
mean energy bounds, using methods of convex relaxations. We discuss the implication of these findings on
possible thermodynamic state transformations in the quantum regime.
\end{abstract}

\maketitle

\section{Introduction}

In chemical reactions, it is common that a certain reaction should in principle be allowed, but in reality cannot take place (or occurs at extremely low rates) because of the presence of some large energy barrier. Fortunately, the situation is sometimes redeemed by the presence of certain chemical substances, referred to as catalysts, which effectively lower the energy barrier across the transformation. That is to say, catalysts significantly increase the reaction rates. Importantly, these catalysts can remain unchanged after the occurrence of the reaction, and hence a small amount of catalytic substance could be used repeatedly and is sufficient to facilitate the chemical reaction of interest.

The basic principles of chemical reactions are governed by thermodynamic considerations such as the second law. There have specifically been a number of recent advances in the quest of understanding the fundamental laws of thermodynamics \cite{secondlaw,HO03,BMORS13,SSP13,EDRV12,Gallego}. 
These efforts are especially focused on the quantum nano-regime, where finite size effects and quantum coherences are 
becoming increasingly relevant. One particularly insightful approach is to cast thermodynamics as a resource theory \cite{HO03,BMORS13, janzing2000, 1159774}, reminding of notions in entanglement theory \cite{Horodecki02, LTW11, LW13}. In this framework, thermodynamics can be seen as the theory that describes conditions for state transformation  $\rho\rightarrow\sigma$ 
from some quantum state to another
under \emph{thermal operations} (TO). The notion of TO means allowing for the full set of global unitaries which are energy preserving in the presence of some thermal bath. This is a healthy and fruitful standpoint, and allows the application of many 
concepts and powerful tools derived from information theory \cite{FDOR12,dahlsten2011, delrio2011}. 


In the context of thermal operations, catalysts emerge as ancillatory systems that facilitate state transformation processes: there are cases where $\rho\rightarrow\sigma$ is not possible, but there exists a state $\omega_C$ such that $\omega_C\otimes\rho\rightarrow\omega_C\otimes\sigma$ \emph{is} possible. The metaphor of catalysis is appropriate indeed:
This implies that by using such a catalyst $\omega_C$, one is enabled to perform the thermodynamic transformation $\rho\rightarrow\sigma$, while returning the catalyst back in its \emph{exact} original form. This is called \emph{exact catalysis}. The inclusion of catalyst states in thermal operations serve as an important step in an eventual complete picture of
quantum thermodynamics; it allows us to describe thermodynamic transformations in the full picture, where the system is interacting with experimental apparatus, for example a clock system. 
Furthermore, it has been shown that one can obtain necessary and sufficient conditions for exact catalysis 
in terms of a whole family of generalised free energies \cite{secondlaw}. The ordinary second law of ever-decreasing free energy is but the constraint on one of these free energies. 

Naturally, for physically realistic scenarios inexact catalysis is anticipated, where the catalyst is returned except for a slight degradation. However,
rather surprisingly, it has been shown \cite{secondlaw} that at least in some cases, the conditions for catalytic transformations are highly non-robust against small errors induced in the catalyst. 
The form of the second law thus depends crucially on the measure used to quantify inexactness.
In particular, if inexactness is defined in terms of small trace distance, then there is no second law at all: for any $\varepsilon>0$ one could pick \textit{any} two states $\rho$ and $\sigma$, and starting from $\omega_C\otimes\rho$, get $\varepsilon$-close in terms of trace distance to $\omega_C\otimes\sigma$ via thermal operations.
We refer to this effect as \emph{thermal embezzling}: Here one observes that instead of merely catalysing the reaction, energy/purity has possibly been extracted 
from the catalyst and used to facilitate thermodynamic transformations, while leaving the catalyst state arbitrarily close to being intact \cite{vDH03}. On physical grounds, such a 
setting seems implausible, even though it is formally legitimate. A clarification of this puzzle seems very much warranted.

Argued formally, a first hint towards a resolution may be provided by looking at how the error depends on the system size.
Naturally, the trace distance error $\varepsilon$ depends on the dimension of the catalyst states $\dim(\omega_C)=n$; nevertheless one can find examples of catalysts where $\varepsilon\rightarrow 0$ as $n$ approaches infinity.  While examples show that in principle thermal embezzling may occur \cite{secondlaw}, hardly anything else is known otherwise.
 Indeed, it would be interesting to understand the crucial properties that distinguish between a catalyst and an active reactant in thermodynamics.
From a physical perspective, it seems highly desirable to understand to what the effect of embezzling can even occur for physically plausible systems.

In this work, we highlight both the power and limitations of thermal catalysis, by providing comprehensive answers to the above questions raised. 
Firstly, we construct a family of catalyst states depending on dimension $n$, which achieves the optimal trace distance error while facilitating the state transformation $\rho\rightarrow\sigma$, for $\rho$ and $\sigma$ being some arbitrary $m$-dimensional states. This is done for the regime where the Hamiltonians of the system and catalyst are trivial. Secondly, we show that thermal embezzling with arbitrary precision cannot happen under reasonable constraints on the catalyst.
More precisely, whenever the dimension of the catalyst is bounded, we derive non-zero bounds on the trace distance error. By making use of splitting techniques to simplify the optimization problems of interest, such bounds can also be obtained when the expectation value of energy of the catalyst state is finite, for catalyst Hamiltonians with \emph{unbounded} energy eigenvalues and a finite partition function.
We hence set very strong limitations on the possibility of enlarging the set of allowed operations in quantum thermodynamics, if systems with reasonable Hamiltonians are being considered.


\section{Results}
\subsection{The power of thermal embezzling}
We begin by exploring the case for vanishing, trivial Hamiltonians, where it is known that thermal embezzling can occur. This is also the simplest case of thermodynamics in resource theory \cite{secondlaw}, when all energy levels are fully degenerate, and the Hamiltonian is simply proportional to the identity operator. Entropy and information, instead of energy, become the main quantity that measures the usefulness of resources. In such cases, the sole conditions governing a transition from some quantum state $\rho$ to $\sigma$ is that the eigenvalue vector of $\rho$ majorizes that of $\sigma$ \cite{HO03}. This is commonly denoted as $\rho\succ\sigma$. Such a condition also implies that entropy can never decrease under thermal operations~\cite{secondlaw}.

To investigate thermal embezzling in this setting, one asks if given fixed $m,n$, what is the smallest $\varepsilon$ such that there exists a catalyst state $\omega_C$ that satisfies 
\begin{equation}\label{eq:statetransf}
\omega_C\otimes\frac{\mathbb{I}}{m}\succ\omega_C'\otimes|0\rangle\langle 0|,
\end{equation}
where the trace distance $d(\omega_C,\omega'_C)$ between the input catalyst $\omega_C$ and output catalyst $\omega'_C$ is not greater than $\varepsilon$. This trace distance is used as a measure of catalytic error throughout our analysis. If some catalyst pair $(\omega_C,\omega'_C)$ satisfies condition Eq.~\eqref{eq:statetransf} with trace distance $\varepsilon$, then it also facilitates $\omega_C\otimes\rho\rightarrow\omega'_C\otimes\sigma$ for any $m$-dimensional states $\rho,\sigma$. This is because a pure state majorizes any other state, while the maximally mixed state ${\id}/{m}$ is majorized by any other state.

Since majorization conditions depend solely on the eigenvalues of the density matrices $\omega_C$ and $\omega'_C$, one can phrase this problem of state transformation in terms of a linear minimization program over catalyst states diagonal and ordered in the same basis (see appendix). In fact, the eigenvalues of $\omega_C, \omega'_C$ which give rise to optimal trace distance error can be solved by such a linear program, although these eigenvalues are non-unique. Whenever $m\geq 2$ and $n=m^a$ where $a\geq 1$ is an integer, we provide an analytic construction of catalyst states, which we later show to be optimal for the state transformation in Eq.~\eqref{eq:statetransf}. Let the initial catalyst state $\omega_C = \sum_{i=1}^n \omega_i |i\rangle\langle i|$, where $\omega_1 = {m}/({1+(m-1)a})$, 
\begin{equation}
  \omega_i = \begin{cases}
     \omega_1  m^{-  \left \lceil{\log_m i}\right \rceil }& \text{if $2 \leq i \leq n/m$},\\
     0 & \text{if $i>n/m$}.
  \end{cases}
\end{equation}
Note that our catalyst state $\omega_C$ does not have full rank, and this is crucial for the majorization condition in Eq.~\eqref{eq:statetransf} to hold, since $\rho\succ\sigma$ implies that $\rk(\rho)\leq\rk(\sigma)$, and the joint state $\omega_C'\otimes|0\rangle\langle 0|$ can have at most rank $n$. The output catalyst $\omega_C'$ can be obtained from $\omega_C$, by subtracting a small value $\varepsilon$ from the largest eigenvalue $\omega_1$ and distributing the amount $\varepsilon$ equally over the indices $i>n/m$. This brings $\omega_C'$ to be a state of full rank $n$. We show that this family achieves trace distance error
\begin{equation}\label{eq:opt_trdist}
d_{m,n} = \frac{m-1}{1+(m-1)\log_m n},
\end{equation}
which we prove by mathematical induction to be optimal, given fixed $m, n$ where $n=m^a$ (see appendix).
\begin{figure}
    \includegraphics[width=0.4\textwidth]{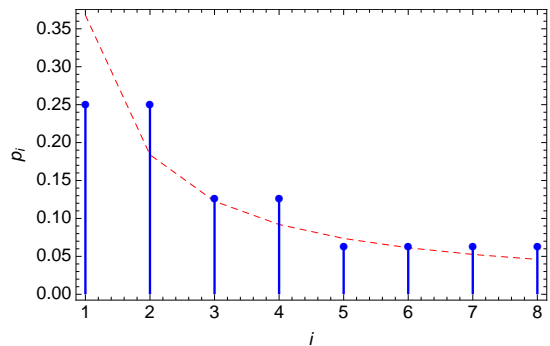}
    \caption{The eigenvalues of our final catalyst state $\omega_C'$ (blue) versus those of $\tilde{\omega}_C$ proposed in Ref. \cite{vDH03} (red, dashed), for $m=2$ and $n=8$.}
    \label{fig:1}
\end{figure}
\begin{figure}[h!]
    \includegraphics[width=0.4\textwidth]{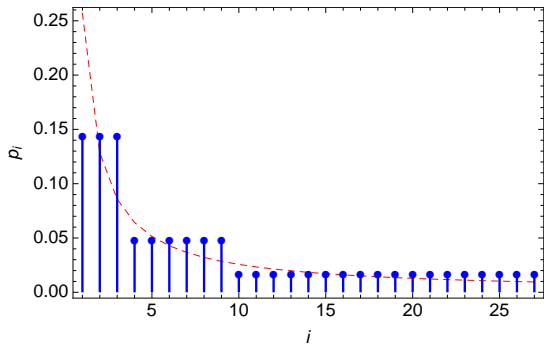}
    \caption{The eigenvalues of our final catalyst state $\omega_C'$ (blue) versus those of $\tilde{\omega}_C$ proposed in Ref. \cite{vDH03} (red, dashed), for $m=3$ and $n=27$.}
    \label{fig:2}
\end{figure}

Figs.\ \ref{fig:1} and \ref{fig:2} compare our final catalyst state with the state 
\begin{equation}
\tilde{\omega}_C = \frac{1}{C(n)} \sum_{i=1}^n \frac{1}{i} |i\rangle\langle i|,
\end{equation}
with $C(n)= \sum_{i=1}^n {1}/{i}$ being the normalization constant. The family $\tilde{\omega}_C$ was proposed in Ref.~\cite{vDH03} for embezzling in the LOCC setting. In Fig.\ \ref{fig:3}, we compare the trace distance error achieved by catalyst $\tilde{\omega}_C$ from Ref.~\cite{vDH03} with the error achieved by our catalyst $\omega_C$. We see that for small dimensions, our catalyst outperforms $\tilde{\omega}_C$, however asymptotically the error scales with $\log n$ for both catalysts.

\begin{figure}
\includegraphics[scale=0.65]{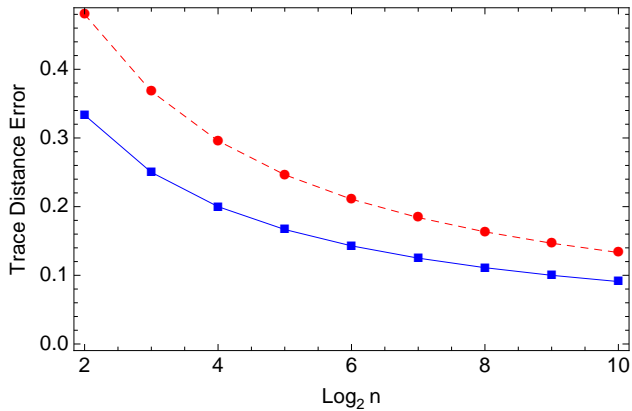}
\caption{The comparison of trace distance error for our state (blue, solid) and the catalyst state in Figs.\ \ref{fig:1} and \ref{fig:2} (red, dashed), for the case where $m=2$.}
\label{fig:3}
\end{figure}

\subsection{The limits of thermal embezzling}
In this section, we are interested in finding additional physical restrictions which prevent thermal embezzling. To do so, we look at general Hamiltonians for both the system and catalyst, where the energy of the system comes into play. 
In \cite{secondlaw}, it is shown that the monotonicity of quantum R{\'e}nyi divergences \cite{muller2013quantum} form necessary conditions for state transformations. More, precisely, for arbitrary $\rho_S$ and $\rho_S'$, if $\rho_S\rightarrow\rho_S'$ is possible via catalytic thermal operations, then for all $\alpha\geq 0$, 
\begin{equation}\label{eq:secondlaw}
D_\alpha(\rho_S\|\tau_S)\geq D_\alpha(\rho_S'\|\tau_S)
\end{equation} 
holds, where $\tau_S$ is the thermal state of system $S$, at temperature $T$ of the thermal bath. 

Eq.~\eqref{eq:secondlaw} implies that one can use the monotonicity of R{\'e}nyi divergences to find lower bounds on thermal embezzling error for state transformation between arbitrary states $\rho_S$ and $\rho_S'$. For simplicity, we present the case where $\rho_S$ and $\rho_S'$ are diagonal (in the energy eigenbasis of $H_S$). The case for arbitrary states can be treated similarly, and details are given in the appendix.

For the case where two states $\rho$ and $\sigma$ are diagonal, the R{\'e}nyi divergences are defined as 
\begin{equation}
D_\alpha(\rho\|\sigma):= \frac{1}{\alpha-1}\log \sum_{i} \rho_i^\alpha \sigma_i^{1-\alpha},
\end{equation}
where $\lbrace \rho_i\rbrace,\lbrace \sigma_i\rbrace$ are the eigenvalues of $\rho$, and $\sigma$. 

Again, for states $\rho_S$ and $\rho_S'$ diagonal, it suffices to look at a single transformation
\begin{equation}\label{eq:univ2}
\omega_C\otimes\tau_S \rightarrow \omega_C'\otimes \Pi_{\max}^S,
\end{equation} 
where $\Pi_{\max}^S=|E^S_{\rm max}\rangle\langle E^S_{\rm max}|$ is the pure energy eigenstate with energy $E^S_{\rm max}$. 
Note that both $\tau_S$ and $\Pi_{\max}^S$ are diagonal in the energy eigenbasis. As explained in the appendix, Eq.~\eqref{eq:univ2} is sufficient to ensure universal thermal embezzling for aribtrary states $\rho_S$ and $\rho_S'$ as long as they are diagonal in the same energy eigenbasis. Similarly, one can take $\omega_C$ and $\omega'_C$ to be diagonal in the energy eigenbasis of $H_C$~\cite{secondlaw}.
This can be written as the following minimization problem, $\varepsilon$ being the solution of
\begin{eqnarray}\label{array:min1}
{\rm min} &&\quad\frac{1}{2}\|\omega_C-\omega_C'\|_1 \\
{\rm s.t.} &&\quad \forall\alpha\geq 0, D_\alpha(\omega_C\otimes\tau_S\|\tau_{CS})\geq D_\alpha(\omega_C'\otimes\Pi_{\max}^S\|\tau_{CS}),\nonumber\\
&&\quad 0\leq \omega_C, \omega_C' \leq \id,\nonumber
\end{eqnarray}
where $\tau_{CS}=\tau_C\otimes\tau_S$ is the thermal state of the catalyst and system.
The system Hamiltonian ${H}_S$ is assumed to be finite.

A straightforward relaxation of Eq.~\eqref{array:min1} allows us to now consider an alternative problem for some fixed $\alpha$
\begin{align}\label{array:min2}
\varepsilon_\alpha := \min \quad& \frac{1}{2}\|\omega_C-\omega_C'\|_1 \\
{\rm s.t.} \quad& D_\alpha (\omega_C\otimes\tau_S\|\tau_{CS}) \geq D_\alpha(\omega_C'\otimes \Pi_{\max}^S \|\tau_{CS}),\nonumber\\
\quad & 0\leq \omega_C, \omega_C' \leq \id .
\nonumber
\end{align}
From Ref.~\cite{secondlaw}, we know that any $(\omega_C,\omega_C')$ feasible for Eq.~\eqref{array:min1} is also feasible for Eq.~\eqref{array:min2}. Therefore, for any $\alpha\geq 0$, $\varepsilon\geq\varepsilon_\alpha$.
By choosing $\alpha$ one can arrive at much simpler optimization problems, that provide lower bounds for the trace distance error. We apply this to study two cases, detailed as below.

{\bf 1. Bounded dimension:} Consider the case where both the system $H_S$ and catalyst Hamiltonians ${H}_C$ have fixed dimensions, and denote the maximum energy eigenvalues as $E_{\rm max}^S, E_{\rm max}^C$ respectively. One sees that the solution of Eq.~\eqref{array:min1} is lower bounded by Eq.~\eqref{array:min2} for $\alpha\rightarrow\infty$. Recall that w.l.o.g.\ we can assume that $\omega_C$ and
$\omega_C'$  are diagonal in the same basis, which we take to be the energy eigenbasis. Eq.~\eqref{array:min2} can be rewritten as
\begin{eqnarray}
\varepsilon_\infty=\min &&\quad \frac{1}{2}\sum_i |\omega_i-\omega_i'|\\
{\rm s.t.} &&\quad \max_i \frac{\omega_i}{\tau_i} \geq \frac{Z_S}{e^{-\beta E_{\rm max}^S}} \max_i \frac{\omega_i'}{\tau_i},
\nonumber\\
&&\quad \sum_i \omega_i = \sum_i \omega_i' = 1,\nonumber
\end{eqnarray}
where $\tau_i = {Z_C}^{-1} e^{-\beta E^C_i}$ are the probabilities defined by the thermal state of the catalyst Hamiltonian, and $Z_C$ is the partition function of the catalyst system. To solve this problem, we note that the optimal strategy to maximize the quantity $\max_i {\omega_i}/{\tau_i}$ within the $\varepsilon-$ball of $\omega'_C$ is to increase one of the eigenvalues by $\varepsilon$, so that the quantity 
$\max_i ({\omega_i+\varepsilon})/{\tau_i}$ is maximized. With further details in the appendix, we show that the trace distance error can therefore be lower-bounded by 
\begin{equation}\label{eq:opt_BD}
d_{\rm opt} ({H}_S, {H}_C) \geq \left(\frac{Z_S}{e^{-\beta E_{\rm max}^S}} -1\right) \frac{e^{-\beta E_{\rm max}^C}}{Z_C},
\end{equation}
where $Z_S, Z_C$ are the partition functions of the system and catalyst.

Although this bound is valid for arbitrary finite-dimensional Hamiltonians, it is not tight. Indeed, in the case of trivial Hamiltonians where all states have constant energy value, normalized to 0, the partition functions $Z_S,Z_C$ reduce to the dimension $m,n$ of the system and catalyst. This bound then yields $d_{\rm opt}(\mathbf{0}_S,\mathbf{0}_C) \geq (m-1)/{n}$, which is much weaker than the optimal trace distance we derived in Eq.~\eqref{eq:opt_trdist}.


%
%

{\bf 2. Hamiltonians with unbounded energy levels:} A more general result holds for unbounded dimension and energy levels where the partition function $Z_C$ is finite. More precisely, for such cases, we show that setting an upper bound on the average energy of the catalyst state limits thermal embezzling.

Let us now explain the proof of our results. Consider some ${H}_C$ with unbounded energy levels $\lbrace E_j^C \rbrace$. For simplicity, we restrict ourselves to the case where the catalyst states are diagonal in the energy eigenbasis, and assume the system Hamiltonian to be trivial with dimension $m=2$. A more general derivation involving arbitrary system Hamiltonians may be found in the appendix. \vspace{0.2cm}
\\
{\it A) Formulation of the problem:} 
Consider the minimization of catalytic error under the relaxed constraint that monotonicity for the $\alpha$-R{\'e}nyi divergence is satisfied. Using Eq.~\eqref{array:min2} with $\alpha={1}/{2}$, by substituting ${H}_S=\mathbf{0}_S$, the first constraint can be simplified as follows
\begin{eqnarray*}
	D_{1/2} \Bigl(\omega_C\otimes\frac{\id}{2}\|\tau_{C}\otimes\frac{\id}{2}\Bigr) &\geq& D_{1/2} \Bigl(\omega_C'\otimes|0\rangle\langle 0| \|\tau_{C}\otimes\frac{\id}{2}\Bigr),\\
 D_{1/2} (\omega_C\|\tau_C) &\geq& D_{1/2} (\omega_C'\|\tau_C) + D_{1/2} \Bigl(|0\rangle\langle 0|\|\frac{\id}{2}\Bigr),\\
\sqrt{2} \sum_{i} {\omega_i}^{1/2} e^{-\beta E_i^C/2} &\leq&  \sum_i {\omega_i'}^{1/2}
e^{-\beta E_i^C/2}.
 \end{eqnarray*}
Furthermore, we want that the initial catalyst state must have a expectation value of energy no larger than some finite $E$. 
In summary, we now look at the minimization of trace distance under the following constraints
\begin{eqnarray}\label{array:min3}
{\rm min} &&\quad \sum_i |\omega_i - \omega_i'| \\
{\rm s.t.} &&\quad \sum_i \sqrt{\omega_i'} \gamma^{E_i^C}\geq \sqrt{2}\sum_i \sqrt{\omega_i}\gamma^{E_i^C},\nonumber\\
&&\quad \omega_i, \omega_i'\geq 0 \,\,\,\forall i, ~\sum_{i} E_i^C \omega_i \leq E,\nonumber
\end{eqnarray}
where $\gamma=e^{-\beta/2 }\in (0,1)$. Denote the solution of this problem as $\varepsilon$. In the subsequent steps, our goal is to show that $\varepsilon$ is lower bounded by a non-zero constant, by making use of techniques of convex relaxations of optimisation problems. As such, this is an intricate problem, as it is a non-convex problem both in $\omega_i$ and $\omega_i'$.

{\it B) Splitting a relaxed minimization problem:}
The key idea to proceed is to suitably split the problem into two independent optimization problems in a relaxation,
which then turn out to be convex optimization problems the duals of which can be readily assessed. The starting point of this approach is rooted in the observation
that for any $\omega_i,\omega_i'\in [0,1]$, the following inequality holds true,
\begin{equation}\label{eq:split}
{\omega_i'}^{1/2}-{2\omega_i}^{1/2} \leq {|\omega_i'-\omega_i|}^{1/2} - {\omega_i}/{3}.
\end{equation}
Since requiring the R.H.S. of Eq.~\eqref{eq:split} to be positive is a less stringent compared to the L.H.S., one can now further use it to obtain a lower bound for the minimization in Eq.~\eqref{array:min3}. By defining a new variable $x_i=|\omega_i-\omega_i'|$, we arrive at a new problem
\begin{eqnarray}\label{array:min4}
{\rm min} &&\quad \sum_i x_i \\
{\rm s.t.} &&\quad \sum_i \sqrt{x_i} \gamma^{E_i^C}\geq \frac{1}{3}\sum_i \sqrt{\omega_i}\gamma^{E_i^C},\nonumber\\
&&\quad x_i, \omega_i\geq 0 \,\,\,\forall i, ~\sum_{i} E_i^C \omega_i \leq E ,\nonumber
\end{eqnarray}
the solution $\zeta$ of which obeys $\varepsilon\geq\zeta$. In the next step, we will see that the relaxed problem in Eq.\eqref{array:min4} is much simpler to solve, since it can be written as two separate, independent optimization problems. One can see now that the variables $x_i, \omega_i$ are independent from each other. This allows us to first perform a minimization of the function $\sum_i \sqrt{\omega_i}\gamma^{E_i^C}$ for constraints involving $\omega_i$ only.

{\it C) Invoking energy constraints to provide lower bound:}
The energy constraint on $\omega_C$ plays a crucial role in lower bounding the solution. Intuitively, when such a constraint is placed for some finite $E$, it implies that the probability of populating some relatively low energy levels \textit{cannot} be vanishingly small. We prove this with more rigor in the appendix.
Along this line of reasoning, one concludes that for the minimization 
\begin{eqnarray}\label{array:min5}
{\rm min} &&\quad  \sum_i \sqrt{\omega_i}\gamma^{E_i^C} \\
{\rm s.t.} &&\quad \omega_i\geq 0 \,\,\,\forall i, ~\sum_{i} E_i^C \omega_i \leq E ,\nonumber
\end{eqnarray}
its solution $\varepsilon_1 >0$ has to be strictly positive. More precisely,
\begin{eqnarray}
\varepsilon_1~&&=~ \displaystyle\max_{W\in(0,1)} W  \gamma^{E_j^C(W)},
\end{eqnarray}
where $j(W)= \min \lbrace j: E_{j+1}^C> {E}/({1-W})$. 
A derivation of this expression can be found in the appendix.\vspace{0.2cm}
\\
{\it D) Merging both problems:}
After obtaining a lower bound for the subproblem Eq.~\eqref{array:min5}, we recombine the two problems into Eq.~\eqref{array:min4} to obtain
\begin{eqnarray}\label{array:min6}
{\rm min} &&\quad \sum_i x_i \\
{\rm s.t.} &&\quad \sum_i \sqrt{x_i} \gamma^{E_i^C}\geq \frac{1}{3}\varepsilon_1, \nonumber\\
&&\quad x_i \geq 0 \,\,\,\forall i. \nonumber
\end{eqnarray}
This is a quadratic optimization problem in the variables $\sqrt{x_i}$, hence it is easy to obtain the Lagrange dual of this problem, which takes on a very simple form
\begin{eqnarray}\label{array:min7}
{\rm min} &&\quad - \frac{1}{4} \lambda^2 \sum_i \gamma^{2E_j^C} +\lambda\varepsilon_1 \\
{\rm s.t.} &&\quad \lambda \geq 0 \nonumber,
\end{eqnarray}
involving the simple minimization of a quadratic function w.r.t. $\lambda$. Solving this we arrive at a lower bound
\begin{equation}\label{eq:ELB}
\varepsilon \geq \zeta\geq \frac{1}{9}\frac{\varepsilon_1^2}{Z_C} >0,
\end{equation}
where $Z_C=\sum_i \gamma^{2E_i^C} = \sum_i e^{-\beta E_i^C}$ is the partition function of the catalyst Hamiltonian. 
We summarize our findings in Table \ref{table:comp}.

\begin{table}[h]
\begin{tabular}{|c|c|c|}\hline
\diaghead{\theadfont Diag ColumnmnHead II}%
{Energy levels}{Dimension}&
Bounded&Unbounded\\
\hline 
Fully degenerate & No & Yes \\ 
\hline 
Bounded & No & \specialcell{Probably, true at least for \\fully degenerate Hamiltonians} \\ 
\hline 
\specialcell{Unbounded} & N/A & \specialcell{No, if average energy and part- \\ition function is finite} \\ 
\hline 
\end{tabular}
\caption{The occurance of thermal embezzling with arbitrary precision, under different settings. For regimes labeled "No", explicit bounds can be found in Eq.~\eqref{eq:opt_trdist}, Eq.~\eqref{eq:opt_BD} and Eq.~\eqref{eq:ELB}.}
\label{table:comp}
\end{table}

\section{Discussion and conclusion}
The bounds on dimensionality are closely related to energy restrictions. While placing an upper bound on the dimension directly imply an upper bound on the average energy, the reverse statement is not generally true. However, if one restricts not only the expectation value of the energy distribution, but also restricts its variance to be finite, then this is almost equivalent to placing a dimension restriction. For example, given any non-degenerate Hamiltonian ${H}_C$ with unbounded eigenvalues, consider the set of catalyst states such that the average energy and variance of a given catalyst is finite. Then by the Chebyshev inequality one can understand that this is equivalent to introducing a cut-off on the maximum energy eigenvalue (and therefore on the dimension). We note that it is easy to see that e.g. for the harmonic oscillator the variance is not always bounded whenever the mean energy is bounded.

In the case of infinite-dimensional Hamiltonians, we have also shown that for certain classes of catalyst Hamiltonians, explicit bounds can be derived on the trace distance error of a catalyst when the average energy is finite. Our results have covered a large range of Hamiltonians which are commonly found in physical systems - including the important case of the Harmonic oscillator in free systems - with the minimal assumption that partition function $Z_C$ is finite, which holds for all systems for which the canonical ensemble is well-defined. However, we know that thermal embezzling can be arbitrarily accurate as dimension grows, at least in the simplest case of the trivial Hamiltonian. This implies that there will be specific cases of infinite-dimensional Hamiltonians where simply bounds on average energy do not give explicit bounds on thermal embezzling error. We suspect that this may be true for Hamiltonians with unbounded dimension, but upper bounded energy levels. The reason is that if dimension is unbounded, then there must exist an accumulation point in the energy spectrum. The subspace of this accumulation point will be very similar to the trivial Hamiltonian.

In summary, we have investigated the phenomenon of thermal embezzling under different physical scenarios. While one acknowledges that thermal embezzling is possible in the fully degenerate Hamiltonian case, we show that under many realistic circumstances, with physically motivated restrictions, thermal embezzling cannot happen with arbitrary accuracy. 
In this sense, we resolve the puzzle of thermal embezzling, hence further contributing to a complete understanding of the thermodynamic laws in the quantum world.

{\it Acknowledgements:} This work has been supported by the EU (RAQUEL, SIQS), the ERC (TAQ), the FQXi, and the COST network. NN, LM, CC and SW are funded by the Ministry of Education (MOE) and National Research Foundation Singapore, as well as MOE Tier 3 Grant "Random numbers from quantum processes" (MOE2012-T3-1-009).

\vspace{8cm}
\onecolumngrid
\section*{Appendix}
\setcounter{equation}{0}
\setcounter{section}{0}
In this appendix we fully elaborate our findings on thermal catalysis. We begin in Section \ref{subsec:resource} by explaining the similarities and subtle differences between thermal embezzling and embezzling in the LOCC setting. The R{\'e}nyi divergences and their relation to thermal operations are detailed in Section \ref{subsec:d_alpha}. Proceeding to Section \ref{sec:OptFamilyL}, we focus on thermal embezzling for trivial Hamiltonians with fixed dimensions. On the one hand, we investigate the problem of finding a catalyst which allows us to perform thermal embezzling with minimum possible error in trace distance.
We detail the proofs on our construction of a catalyst family (given dimension parameters for both system of interest and catalyst), and prove that our construction achieves the optimal embezzling error. 

On the other, by placing restrictions on the dimension, we derive non-zero lower bounds for embezzling error, considering arbitrary system and catalyst Hamiltonians. The proofs are detailed in Section \ref{sec:DimBounds}. Some technical background on $\alpha-$R{\'e}nyi divergences and their relation to thermodynamic operations are given. Lastly, in Section \ref{sec:ELB} we focus on infinite-dimensional Hamiltonians, with unbounded energy levels (and finite partition function). We show that as long as the average energy of the catalyst is finite, explicit lower bounds on accuracy of embezzling can be obtained.

\section{Embezzling and catalysis}\label{sec:emb&cat}
\subsection{Thermodynamics as a resource theory}\label{subsec:resource}
Resource theories are frameworks useful in identifying states which are valuable, under specific classes of allowed operations and states given for free. A state is a valuable resource if one can use it to create many other states under the set of allowed operations. Thermodynamics can be viewed as a resource theory \cite{HO03,BMORS13}, where the allowed operations are the so-called \textit{thermal operations}. They are summarized as follows: consider a system $S$, given a state $\rho_S$ and the Hamiltonian $H_S$, one can
\begin{enumerate}
\item for any bath system $B$ with Hamiltonian $H_B$, attach any thermal state $\tau_B = e^{-\beta H_B}/{\tr [e^{-\beta H_B}]}$ to $\rho_S$, where $\beta=1/{kT}$ is the inverse temperature,
\item perform any unitary $U$ over the global system $SB$ that commutes with the total Hamiltonian, i.e. $[U,H_S+H_B]=0$,
\item trace out the bath $B$.
\end{enumerate}


Recently, the framework of thermal operations have been used to prove a second law in Ref.\ \cite{secondlaw} by including catalytic effects. This is because there exists certain states $\rho$ and $\sigma$ such that via thermal operations, $\rho\nrightarrow\sigma$, but $\rho\otimes\omega_C\rightarrow\sigma\otimes\omega_C$ for some state $\omega_C$. More precisely, catalytic effects can be accounted for by adding a fourth rule 

\begin{enumerate}
\setcounter{enumi}{3}
\item for any catalyst system $C$ with Hamiltonian $H_C$, attach any additional catalyst state $\omega_C$,  as long as the returned state $\omega_C'$ is $\varepsilon-$close to its original state $\omega_C$,
\end{enumerate} 
to the set of allowed operations.
One can now ask, given $\rho_S$, what are the states $\rho_S'$ can be obtained from $\rho_S$ under approximate \emph{catalytic thermal operations}? More precisely, do there exist 
$\omega_C,\omega_C'$ which are $\varepsilon-$close to each other, such that $\omega_C\otimes\rho_S\rightarrow\omega_C'\otimes\rho_S'$? 

Depending on $\varepsilon$ and the measure of closeness used, the conditions for $\rho_S\rightarrow\rho_S'$ to occur can vary. For example, if $\varepsilon$ is required to be zero, i.e. the catalyst must be returned in its exact form, then Ref.\ \cite{secondlaw} shows for any $\rho_S$ and $\rho_S'$ such that $\rho_S\rightarrow\rho_S'$ is possible via catalytic thermal operations, a whole set of R{\'e}nyi divergences must necessarily decrease. In the next Section \ref{subsec:d_alpha}, we define the R{\'e}nyi divergences and state the results of \cite{secondlaw} in detail.
On the other hand, if $\varepsilon$ is measured in terms of trace distance between the input and output catalyst only, Ref.\ \cite{secondlaw} also proves that for any $\varepsilon>0$, the state transformation conditions are trivial, i.e. any $\rho_S$ can be transformed to any $\rho_S'$.
We denote thermal embezzling as the phenomenon that by requiring only the input and output catalyst to be close in terms of trace distance, one can achieve $\rho_S\rightarrow\rho_S'$ for any $\rho_S,\rho_S'$. 

Another well-studied example of a resource theory is entanglement theory, where the allowed operations are those that can be implemented using local operations and classical communiaction (LOCC), while free states are the set of separable states. The interconversion of resources states in entanglement theory have been studied intensively, and have also provided insight into the resource theory of thermodynamics.


Embezzling states were originally introduced for the LOCC setting in Ref.\ \cite{vDH03}. An entangled state $|\nu(n)\rangle_{AB}\in\mathcal{C}^n\otimes\mathcal{C}^n$ shared between two parties $A$ and $B$ can be used as a resource to prepare some other state (of much smaller dimension), 
\begin{equation}
|\nu(n)\rangle_{AB}\rightarrow_{1-\varepsilon} |\nu(n)\rangle_{AB} |\psi\rangle_{AB}.
\end{equation}
The fidelity between the actual final state with $|\nu(n)\rangle_{AB} |\psi\rangle_{AB}$ is denoted by $1-\varepsilon$ , which goes to 1 when $n$ goes to infinity. This enables the approximate preparation of the state $|\psi\rangle_{AB}$, while the embezzling resource state is also left close to its original state. Such a preparation can even be achieved simply via local operations (LO). The family $|\nu(n)\rangle_{AB}$ is called a \emph{universal embezzling state} if it enables the preparation of any $|\psi\rangle_{AB}$. While this seemingly violates entanglement monotonicity under LOCC operations, one quickly realises that it is really because the closeness in entanglement content of two states depend not only on the fidelity, but also the dimension. Hence entanglement is exhausted to prepare $|\psi\rangle_{AB}$, while $|\nu\rangle_{AB}$ remains close to intact on the whole. However, there is also something special about embezzling states, in the sense that a maximally entangled state does not serve as a good embezzling state. In Ref.\ \cite{LW13}, a comprehensive study about general characteristics of embezzling states was conducted, providing insight into the necessary structure of a state to be a good embezzler.
The power of embezzling in LOCC has been applied in several areas of quantum information, such as coherent state exchange protocols \cite{LTW11}, projection games \cite{DSV13}, or as a theoretical tool in proving the Quantum Reverse Shannon Theorem \cite{BCR11}.

There are some similarities between thermal embezzling and LOCC embezzling, however also many distinctive features exist. Most significantly, in thermodynamical systems, the Hamiltonian which determines the evolution of the system plays an important role in state conversion conditions \cite{HO13}. This feature is absent in LOCC embezzling. We summarize the similarities and differences of LOCC and thermal embezzling in Table \ref{supptab:1}.
\begin{table}[h!]
\begin{tabular}{|c|c|c|}
\hline 
 & \textbf{LOCC embezzling} & \textbf{Thermal embezzling} \\ 
\hline 
State conversion conditions & \multicolumn{2}{c|}{Related to majorization} \\ 
\hline 
Phenomena & \multicolumn{2}{c|}{\specialcell{The usage of a catalyst state of large dimension/energy while tolerating slight \\degradation allows the preparation of any desired target state to arbitrary precision}} \\ 
\hline 
Hamiltonians & Not of interest & Of much physical significance \\ 
\hline 
States & Pure, multipartite states & Mixed states in general \\ 
\hline 
\specialcell{Commonly used \\measure of closeness} & \specialcell{Fidelity of global state \\(system and embezzling state)} & \specialcell{Trace distance between \\input and output catalyst state} \\ 
\hline 
Allowed operations & LOCC/LO operations & Catalytic thermal operations \\
\hline
Accuracy limited by & Dimension & Dimension and energy \\
\hline
\end{tabular}\caption{An overview of differences between LOCC and thermal embezzling.}\label{supptab:1}
\end{table}

\subsection{R{\'e}nyi divergences as thermal monotones}
\label{subsec:d_alpha}
In this section we detail the conditions for state transformation under catalytic thermal operations, which are closely related to the R{\'e}nyi divergences. 
The simplest case of catalytic thermal operations is when all Hamiltonians $H_S, H_C$ are trivial. For arbitrary states $\rho$ and $\sigma$, $\rho\rightarrow\sigma$ is possible if and only if 
$\rho\succ\sigma$ \cite{HO03}.
In the case where $H_S$ or $H_C$ are generally non-trivial, state conversion conditions are affected by the involved Hamiltonians. More precisely, instead of majorization, we need to consider the monotonicity of R{\'e}nyi divergences as a (necessary) condition for state transformations. These conditions are used later in Sections \ref{sec:DimBounds} and\ref{sec:ELB} to investigate the limits of thermal embezzling. Let us first define these quantities in Definition \ref{def:D_alpha}.

\begin{definition}[R{\'e}nyi divergences \cite{muller2013quantum}]\label{def:D_alpha}
Given arbitrary states $\rho, \sigma \geq 0$, for $\alpha \in [0, \infty]$, the R{\'e}nyi divergence of $\rho$ relative to $\sigma$ is defined as
\begin{equation}
D_\alpha (\rho\|\sigma):= \frac{1}{\alpha-1}\log \left[\tr \left(\rho^\frac{1-\alpha}{2\alpha}\sigma\rho^\frac{1-\alpha}{2\alpha}\right)^\alpha\right]
\end{equation}
For $\rho,\sigma$ diagonal in the same basis, let $p=( p_1, p_2, ..., p_n)$ and $q=( q_1, q_2, ..., q_n)$ denote the eigenvalue vectors of the $\rho,\sigma$ respectively. Then the R{\'e}nyi divergences reduce to the form
\begin{equation}
D_\alpha(\rho\|\sigma) = D_\alpha(p \| q):= \frac{1}{\alpha-1} \log\sum_i^n p_i^\alpha q_i^{1-\alpha}.
\end{equation}
\end{definition}

It has been shown that the quantities $D_\alpha(\rho\|\tau)$ are thermal monotones, where $\tau$ is the thermal state of the system of interest. Intuitively, this implies that thermal operations can only bring the system of interest closer to its thermal state with the same temperature $T$ as the bath. We detail this in the following Lemma \ref{lem:secondlaw}.

\begin{lemma}[Monotonicity under thermal operations \cite{secondlaw}]\label{lem:secondlaw}
Given some Hamiltonian $H_A$, consider arbitrary states $\rho_A, \rho_A'$, where $\rho_A\rightarrow\rho_A'$ is possible via catalytic thermal operations. Denote by $\tau_A$ the thermal state of system $A$. Then for any $\alpha\in [0,\infty)$, 
\begin{equation}\label{eq:lemsecondlaw}
D_\alpha (\rho_A\|\tau_A) \geq D_\alpha (\rho_A'\|\tau_A).
\end{equation}
Furthermore, for any $\rho_A, \rho_A'$ diagonal in $H_A$, if Eq.~\eqref{eq:lemsecondlaw} holds for all $\alpha\geq 0$, then $\rho_A\rightarrow\rho_A'$ is possible via catalytic thermal operations.
\end{lemma}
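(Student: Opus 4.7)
The plan is to treat the two directions of Lemma~\ref{lem:secondlaw} separately. For the necessity direction, I would augment the transformation with the catalyst and invoke the data-processing inequality for R\'enyi divergences. Concretely, $\rho_A \to \rho_A'$ via catalytic thermal operations means there exist a catalyst state $\omega_C$ with some Hamiltonian $H_C$ and a thermal operation $\mathcal{E}$ on $AC$ such that $\mathcal{E}(\rho_A \otimes \omega_C) = \rho_A' \otimes \omega_C$. Because every thermal operation is a CPTP map that preserves the joint Gibbs state $\tau_A \otimes \tau_C$, the standard data-processing inequality gives $D_\alpha(\rho_A \otimes \omega_C \,\|\, \tau_A \otimes \tau_C) \ge D_\alpha(\rho_A' \otimes \omega_C \,\|\, \tau_A \otimes \tau_C)$ for every $\alpha \in [0,\infty)$. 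Combined with additivity of the R\'enyi divergence on product inputs, $D_\alpha(\rho \otimes \omega \,\|\, \tau_A \otimes \tau_C) = D_\alpha(\rho \,\|\, \tau_A) + D_\alpha(\omega \,\|\, \tau_C)$, the catalyst contributions $D_\alpha(\omega_C \,\|\, \tau_C)$ cancel on both sides, yielding $D_\alpha(\rho_A \,\|\, \tau_A) \ge D_\alpha(\rho_A' \,\|\, \tau_A)$ as claimed.

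For the sufficiency direction, restricted to states diagonal in $H_A$, I would reduce to a purely classical problem. Since both endpoints are diagonal, one may assume the thermal operation acts as a Gibbs-preserving stochastic matrix on the energy populations, so the task becomes: given probability vectors $p, q$ with $D_\alpha(p \,\|\, \tau) \ge D_\alpha(q \,\|\, \tau)$ for every $\alpha \ge 0$, construct a probability vector $\omega$ on some auxiliary Hamiltonian $H_C$ such that $p \otimes \omega$ thermo-majorizes $q \otimes \omega$. The key tool is the interpretation of the R\'enyi divergences as slopes and tail masses of the associated thermo-majorization (Lorenz) curves: requiring the full family of inequalities for $\alpha \in [0,\infty)$ encodes precisely the geometric dominance of the $p$-curve over the $q$-curve in a sufficiently fine-grained sense. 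The catalyst $\omega$ is then built, following the construction of Ref.~\cite{secondlaw}, as a tensor product of many low-weight ``nudging'' distributions, each carefully tuned to push the joint Lorenz curve of $p \otimes \omega$ above that of $q \otimes \omega$ at every cumulative mass point.

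The main obstacle is the sufficiency direction. Two technical issues make it delicate: (i) the continuum $\alpha \in [0,\infty)$ must be controlled simultaneously, so the construction's parameters have to encode a whole family of inequalities rather than any finite subset, which is why the argument is non-trivial even in the classical setting; and (ii) one needs an honest \emph{exact} catalyst, returned to precisely its input state, rather than an asymptotic limit, which forces the construction to be a genuine finite-dimensional tensor product. Both points are handled in Ref.~\cite{secondlaw}; my role would be to invoke that construction and verify that the hypothesis $D_\alpha(\rho_A \,\|\, \tau_A) \ge D_\alpha(\rho_A' \,\|\, \tau_A)$ for all $\alpha \ge 0$ implies the required thermo-majorization of $p \otimes \omega$ over $q \otimes \omega$, thereby producing the desired catalytic thermal operation.
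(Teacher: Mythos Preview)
The paper does not prove this lemma at all: it is stated with attribution to Ref.~\cite{secondlaw} and immediately used, with no argument given in the present paper. So there is no ``paper's own proof'' to compare against; your proposal is essentially a sketch of the argument behind the cited result, which is consistent with what the paper does (namely, defer to \cite{secondlaw}).

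On the merits of your sketch: the necessity direction via data processing plus additivity is the standard route and is correct in outline. One technical caveat worth flagging is that the sandwiched R\'enyi divergence, as written in Definition~\ref{def:D_alpha}, is known to satisfy the data-processing inequality only for $\alpha\in[1/2,\infty)$; to cover the full range $\alpha\in[0,\infty)$ one must either restrict to commuting states (where the classical R\'enyi divergence is monotone for all $\alpha\ge 0$) or supplement with the Petz-type divergence on $[0,1]$, as Ref.~\cite{secondlaw} in fact does. Since the applications in this paper only use $\alpha=1/2$ and $\alpha=\infty$, and moreover work with diagonal catalysts and states, this is not a practical obstruction here, but your blanket appeal to ``the standard data-processing inequality'' for all $\alpha\ge 0$ glosses over it.

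For sufficiency you correctly identify that the substance lies entirely in the explicit catalyst construction of Ref.~\cite{secondlaw}, and you explicitly defer to it. That is exactly what the present paper does, so your treatment matches.
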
 

In essence, Lemma \ref{eq:lemsecondlaw} implies that the monotonicity of R{\'e}nyi divergences are necessary conditions for arbitrary state transformation, and for the case of states diagonal (in the energy eigenbasis), they are also sufficient. 
Let us also use a notation which was introduced in \cite{HO13} for diagonal states: we say that there exists a catalyst $\omega$ such that $\omega_C\otimes\rho_S \succ_T \omega_C\otimes\rho_S'$, if $\rho\rightarrow\sigma$ via catalytic thermal operations. We refer to the notion $\succ_T$ as thermo-majorization.

Now, let us consider the scenario of preparing a pure excited state of maximum energy $\Pi_{\max}^S=|E^S_{\rm max}\rangle\langle E^S_{\rm max}|$ from a thermal state $\tau_S$. Intuitively, if we concern ourselves only with diagonal state transformations, then this is the hardest thermal embezzling scenario possible. This is because for $\Pi_{\max}^S\succ_T\rho_S\succ_T\tau$ is possible for any diagonal $\rho_S$. Therefore, whenever we investigate the case where involved states are diagonal, it suffices to analyse the preparation of such a pure excited state. The necessary and sufficient conditions are
\begin{equation}\label{eq:thermalemb}
\omega_C\otimes\tau_S \succ_T \omega_C'\otimes\Pi_{\max}^S.
\end{equation}
In the next lemma, we show that given fixed Hamiltonians and dimensions, any catalyst state that succeeds in preparing such a state can also be used to facilitate any other state transformation.

\begin{lemma}[Universal embezzlers for diagonal states]\label{lem:univ_emb}
Suppose there exists $\omega_C,~\omega_C'$ diagonal (in $H_C$) such that $\omega_C\otimes\tau_S \succ_T \omega_C'\otimes\Pi_{\max}^S$ holds, and 
$\|\omega_C-\omega_C'\|_1=\varepsilon$. Then for any states $\rho_S, \rho_S'$ diagonal (in $H_S$), $\omega_C\otimes\rho_S \succ_T \omega_C'\otimes\rho_S'$ holds as well.
\end{lemma}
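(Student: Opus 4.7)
The plan is to exploit the fact that $\Pi_{\max}^S$ sits at the very top and $\tau_S$ sits at the very bottom of the thermo-majorization preorder on diagonal states of the system $S$. If the catalyst can facilitate the hardest transformation $\tau_S \to \Pi_{\max}^S$, it should, by sandwiching, facilitate any $\rho_S \to \rho_S'$. Concretely, I would chain three thermo-majorization relations on the joint system $CS$:
\begin{equation*}
\omega_C \otimes \rho_S \;\succ_T\; \omega_C \otimes \tau_S \;\succ_T\; \omega_C' \otimes \Pi_{\max}^S \;\succ_T\; \omega_C' \otimes \rho_S',
\end{equation*}
and then invoke transitivity of $\succ_T$ to conclude.

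First, I would establish two structural facts about thermo-majorization on the system alone. (i) For any diagonal state $\rho_S$, we have $\rho_S \succ_T \tau_S$: physically, the Gibbs state is a fixed point of any thermalization map, so one can map $\rho_S$ to $\tau_S$ by bringing the system into contact with the bath. Equivalently, the Gibbs-rescaled Lorenz curve of any state lies on or above the straight diagonal corresponding to $\tau_S$. (ii) For any diagonal $\rho_S'$, we have $\Pi_{\max}^S \succ_T \rho_S'$: the state $\Pi_{\max}^S$ has all of its weight on the highest energy level, so its Gibbs-rescaled curve starts with the maximal possible initial slope $e^{\beta E_{\max}^S}$ and dominates that of any other diagonal state pointwise. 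Both of these are standard facts about thermo-majorization curves and follow directly from Lemma \ref{lem:secondlaw} applied to the appropriate single-system transformations.

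Second, I would lift these to the joint space $CS$. The key observation is that any thermal operation $\Lambda_S$ acting on $S$ alone, composed with the identity on $C$, is itself a valid (catalytic) thermal operation on $CS$ — one simply chooses a trivial bath/unitary on the $C$-side. Hence $\rho_S \succ_T \tau_S$ immediately implies $\omega_C \otimes \rho_S \succ_T \omega_C \otimes \tau_S$, and similarly $\Pi_{\max}^S \succ_T \rho_S'$ implies $\omega_C' \otimes \Pi_{\max}^S \succ_T \omega_C' \otimes \rho_S'$. Combining these two with the hypothesis $\omega_C \otimes \tau_S \succ_T \omega_C' \otimes \Pi_{\max}^S$ and using that catalytic thermal operations compose (so $\succ_T$ is transitive) yields the desired conclusion. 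Note the catalyst plays the same role on both ends of the chain: its input copy $\omega_C$ accompanies $\rho_S$ in the first link and its output copy $\omega_C'$ accompanies $\rho_S'$ in the last link, so the trace distance $\|\omega_C - \omega_C'\|_1 = \varepsilon$ is preserved.

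The main obstacle — which is really only a bookkeeping obstacle — is justifying the two single-system majorizations (i) and (ii) with full generality across all inverse temperatures and energy spectra; I would cite Lemma \ref{lem:secondlaw} or equivalently the thermo-majorization curve characterization of \cite{HO13} rather than re-derive them. A minor subtlety is ensuring that appending the catalyst preserves thermo-majorization, but this reduces to the remark that the identity channel on $C$ is a thermal operation, so no nontrivial work with the catalyst's Hamiltonian or partition function is needed.
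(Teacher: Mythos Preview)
Your proposal is correct and essentially identical to the paper's proof: the paper constructs the composite thermal operation $\mathcal{M}' = (\id_C \otimes \mathcal{N}_2) \circ \mathcal{M} \circ (\id_C \otimes \mathcal{N}_1)$, where $\mathcal{N}_1$ takes $\rho_S \to \tau_S$ and $\mathcal{N}_2$ takes $\Pi_{\max}^S \to \rho_S'$, which is exactly your three-link chain expressed at the level of maps rather than the preorder $\succ_T$. The same extremality facts ($\rho_S \succ_T \tau_S$ and $\Pi_{\max}^S \succ_T \rho_S'$) and the same lifting via $\id_C \otimes (\cdot)$ are used.
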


\begin{proof}
This can be proven by noting that 
\begin{equation}
	\omega_C\otimes\tau_S \succ_T \omega_C'\otimes\Pi_{\max}^S 
\end{equation}
is equivalent with the existence a thermal operation denoted by $\mathcal{M}$, such that $\mathcal{M}(\omega_C\otimes\tau_S)=\omega_C'\otimes\Pi_{\max}^S$. It remains to show that for any $\rho_S, \rho_S'$, there exists a thermal operation $\mathcal{M}'$ such that $\mathcal{M}' (\omega_C\otimes\rho_S) = \omega_C'\otimes\rho_S'$. 
Since the thermal state $\rho_S\succ_T\tau_S$ is thermo-majorized by any state $\rho_S$, and $\Pi_{\max}^S\succ_T \rho_S'$ thermo-majorizes any other state $\rho_S'$, there exist thermal operations $\mathcal{N}_1, \mathcal{N}_2$ such that $\mathcal{N}_1 (\rho_S) = \tau_S $ and $\mathcal{N}_2 (\rho_S) = \Pi_{\max}$.
Finally, consider 
\begin{equation}
	\mathcal{M'}=(\id_C\otimes\mathcal{N}_2) \circ \mathcal{M}\circ(\id_C\otimes\mathcal{N}_1), 
\end{equation}
then one sees that $\mathcal{M'} (\omega_C\otimes\rho_S) = \omega_C'\otimes\rho_S'$. This implies that $\omega_C\otimes\rho_S\succ_T\omega_C'\otimes\rho_S'$.
\end{proof}

\section{Optimal thermal catalyst for trivial Hamiltonians}\label{sec:OptFamilyL}
In this section we look at a specific thermodynamic transformation involving system ($S$) and catalyst ($C$) states of any dimension $m$ and $n=m^a$ respectively. For the trivial Hamiltonian where all states have same energy, the thermal state of the system is simply the fully mixed state $\frac{\mathbb{I_S}}{m}$, while any pure state corresponds to $\Pi_{\max}^S$, so we simply pick $|0\rangle\langle 0|$ without loss of generality. Note that thermo-majorization conditions are reduced to the simplest form, i.e. that 
\begin{equation}
	\omega_C\otimes \frac{1}{m}\mathbb{I}_S \rightarrow \omega_C' \otimes |0\rangle\langle 0|_S 
\end{equation}
is possible if and only if the initial state majorizes the latter, i.e.
\begin{equation}\label{eq:thermomaj_optcat}
\omega_C\otimes \frac{1}{m}\mathbb{I}_S \succ \omega_C' \otimes |0\rangle\langle 0|_S.
\end{equation} 

In this section we give a construction of catalyst states which allow this transformation, and prove that our construction achieves the optimal trace distance $d(\omega_C,\omega_C') = \frac{1}{2} \|\omega_C-\omega_C'\|_1$ in any fixed dimension $n=m^a$. Furthermore, these states are universal embezzlers, since any catalyst which successfully creates $|0\rangle\langle 0|_S$ from ${\mathbb{I}_S}/{m}$ would also allow to obtain any $\rho_S'$ from any $\rho_S$, as shown in Lemma \ref{lem:univ_emb}.

\begin{definition}
Consider integers $m\geq 2$ and $n=m^a$ where $a \geq 1$. Let $\mathcal{S}_{m,n}$ be the set of $n$-dimensional catalyst state pairs $(\omega_C, \omega_C')$ enabling the transformation 
\begin{equation}
	\omega_C\otimes \frac{1}{m}\mathbb{I}_m \rightarrow \omega_C' \otimes |0\rangle\langle 0|. 
\end{equation}
Let $d_{m,n} = \min \lbrace d(\omega_C,\omega_C')\;|\;(\omega_C ,\omega_C')\in\mathcal{S}_{m,n}\rbrace$.
\label{def:CatalystSet}
\end{definition}

\subsubsection{A family of catalyst states}\label{sec:OptFamily}
%
%
We offer the following construction of catalyst input and output states in any dimension $n=m^a$ where $m\ge 2$ and $a\ge 1$ are integers. We take the output catalyst $\omega_C' = \sum_{i=1}^n \omega'_i |i\rangle\langle i|$, where 
\begin{equation}\label{eq:finalcat}
  \omega'_1 = \frac{1}{1+(m-1)a}
  \;\text{ and }\;
  \omega'_i = \omega_1'  m^{1-  \left \lceil{\log_m i}\right \rceil }.
\end{equation}
A simple way to visualise this is as follows: for the first $m$ elements, the distribution is uniform with some probability $\omega_1$; for the next $m+1$ up to $m^2$ elements the distribution is uniform again, with probability $\omega_1/m$; and so on up to $n=m^a$. The initial $\omega_1$ is then chosen so that the full distribution is normalised.
We choose the input catalyst state to be $\omega_C = \sum_{i=1}^n \omega_i |i\rangle\langle i|$, where
\begin{equation}\label{eq:initialcat}
  \omega_i = \begin{cases}
     \omega'_1 m & \text{if $i=1$} ,\\
     \omega'_i                 & \text{if $2 \leq i \leq \frac{n}{m}$,}\\
     0                         & \text{if $i>\frac{n}{m}$}.
  \end{cases}
\end{equation}
Such a state $\omega_C$ is obtained from $\omega_C'$ by setting all the probabilities for $i>{n}/{m}$ to be zero, while renormalizing by increasing the largest peak of the probability distribution. 
Note that $\omega_1>\omega'_1$ while $\omega_i \le \omega'_i$ for all $i>2$. The trace distance between $\omega_C$ and $\omega_C'$ can be calculated to be 
\begin{equation}
  d(\omega_C,\omega_C') = 
  \frac{1}{2}\sum_{i=1}^{n} |\omega_i-\omega_i'| = 
  \sum_{i:\omega_i > _i'}( \omega_i - \omega_i' )
  = \omega_1-\omega'_1 =  \frac{m-1}{1+(m-1)a}.
\end{equation}
This shows that 
\begin{equation}\label{ineq}
	d_{m,n} \leq \frac{m-1}{1+(m-1)a}, 
\end{equation}
since we have constructed a specific state pair achieving this trace distance. In the next section we will see that for catalysts satisfying Eq.~\eqref{eq:thermomaj_optcat}, smaller values of trace distance cannot be achieved, which implies that Eq.\ (\ref{ineq}) is true with equality, and the family presented above is optimal.

\subsubsection{Optimal catalysis}
In this section we show by induction that 
\begin{equation}
	d_{m,n} \geq \frac{m-1}{1+(m-1)a}.
\end{equation}
Recall that our problem is to minimize over states $\omega_C, \omega_C'$ the trace distance $d(\omega_C,\omega_C')$ such that Eq.~\eqref{eq:thermomaj_optcat} is satisfied. 
We first show that it suffices to minimize over states which are diagonal in the same basis.

\begin{lemma}[States diagonal in the same basis]\label{lem:min_diag_ord}
Consider fixed n-tuples of eigenvalues $(\omega_1,\cdots,\omega_n)$ and $(\omega'_1,\cdots,\omega'_n)$, such that $\omega_C = \sum_i \omega_i |e_i\rangle\langle e_i|$ and $\omega_C' = \sum_i \omega'_i |f_i\rangle\langle f_i|$ are diagonal in two different bases $\lbrace | e_i\rangle\rbrace, \lbrace | f_i\rangle\rbrace$. If $(\omega_C,\omega_C')$ satisfies Eq.~\eqref{eq:thermomaj_optcat}, then there exists 
$\tilde{\omega}_C = \sum_i \tilde{\omega}_i |e_i\rangle\langle e_i|$ such that $d(\omega_C,\omega_C')\geq d(\omega_C,\tilde{\omega}_C)$ and that $(\omega_C,\tilde{\omega}_C)$ also satisfies Eq.~\eqref{eq:thermomaj_optcat}.
\end{lemma}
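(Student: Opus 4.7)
The plan is to exploit the fact that the majorization condition in Eq.~\eqref{eq:thermomaj_optcat} depends only on the spectra of the two catalyst states. Indeed, $\omega_C \otimes \frac{1}{m}\id_m$ has eigenvalues consisting of the spectrum of $\omega_C$ repeated $m$ times (each scaled by $1/m$), while $\omega_C' \otimes |0\rangle\langle 0|$ has eigenvalues given by the spectrum of $\omega_C'$ padded with zeros. Hence replacing $\omega_C'$ by any Hermitian operator $\tilde{\omega}_C$ with the same spectrum leaves feasibility of Eq.~\eqref{eq:thermomaj_optcat} untouched, and the problem reduces to constructing such a $\tilde\omega_C$, diagonal in the fixed basis $\{|e_i\rangle\}$, whose trace distance to $\omega_C$ is no larger than $d(\omega_C,\omega_C')$.

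I would build $\tilde\omega_C$ explicitly via the co-sorting prescription. Let $\pi$ and $\sigma$ be permutations that put $(\omega_i)$ and $(\omega'_i)$ into weakly decreasing order, and define
\[\tilde\omega_C := \sum_i \omega'_{\sigma(\pi^{-1}(i))} |e_i\rangle\langle e_i|,\]
so that, written in the basis $\{|e_i\rangle\}$, the eigenvalues of $\tilde\omega_C$ appear in the same order as those of $\omega_C$. By construction, $\tilde\omega_C$ has the same spectrum as $\omega_C'$, so $(\omega_C,\tilde\omega_C)$ still satisfies Eq.~\eqref{eq:thermomaj_optcat}. Since $\omega_C$ and $\tilde\omega_C$ commute,
\[d(\omega_C,\tilde\omega_C) = \frac{1}{2}\sum_i |\omega_i^{\downarrow} - \omega_i'^{\downarrow}|,\]
where the superscript ${\downarrow}$ denotes decreasing rearrangement.

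To close the argument, I would invoke the classical Mirsky / Lidskii--Wielandt inequality for the Schatten $1$-norm: for any Hermitian $A,B$ with ordered eigenvalue vectors $\lambda^{\downarrow},\mu^{\downarrow}$,
\[\|A-B\|_1 \geq \sum_i |\lambda_i^{\downarrow} - \mu_i^{\downarrow}|.\]
Applying this with $A=\omega_C$ and $B=\omega_C'$ yields $d(\omega_C,\omega_C') \geq d(\omega_C,\tilde\omega_C)$, as required. The main subtlety to watch is the direction of the inequality and the correct alignment of eigenvalues: co-sorting \textbf{minimizes} the trace distance among operators with the prescribed spectrum, whereas anti-sorting would maximize it. Once the Mirsky inequality is applied with this correct sorting, preservation of the majorization constraint is automatic from the spectrum-invariance observed at the start, and no further work is needed.
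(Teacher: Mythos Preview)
Your proof is correct but follows a genuinely different route from the paper's. The paper constructs $\tilde\omega_C$ by applying the fully dephasing channel $\mathcal{N}$ in the basis $\{|e_i\rangle\}$ to $\omega_C'$; the trace-distance inequality then follows from the data-processing inequality (since $\mathcal{N}(\omega_C)=\omega_C$), while the majorization step requires Schur's theorem, because the dephased state $\tilde\omega_C$ has a spectrum that is only \emph{majorized} by that of $\omega_C'$, not equal to it. You instead keep the spectrum of $\omega_C'$ intact and merely relabel it via co-sorting, so feasibility of Eq.~\eqref{eq:thermomaj_optcat} is immediate; the price is that the trace-distance comparison now rests on the Lidskii--Mirsky inequality for Hermitian matrices rather than on data processing. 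Your construction has the minor bonus that the resulting $\tilde\omega_C$ already has its eigenvalues aligned with those of $\omega_C$, which is exactly the ordering assumption invoked at the start of the subsequent theorem; the paper's dephasing construction does not automatically deliver this and the ordering is imposed separately. Both arguments are short and self-contained; they simply trade one standard inequality (data processing plus Schur--Horn) for another (Lidskii--Mirsky).
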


\begin{proof}
There are two steps in this proof: firstly, we construct $\tilde{\omega}_C$ from $\omega_C'$ and show that the trace distance decreases by invoking data processing inequality. Then, we use Schur's theorem to show that majorization holds.
Let $\tilde{\omega}_C = \mathcal{N}(\omega_C')$, where $\mathcal{N}(\rho) = \sum_{i} |e_i\rangle\langle e_i|\rho|e_i\rangle\langle e_i| $ is the fully dephasing channel in the basis $\lbrace |e_i\rangle \rbrace$. Note that since $\omega_C$ is already diagonal in $\lbrace|e_i\rangle\rbrace$, $\mathcal{N}(\omega_C)=\omega_C$. Because the trace distance is non-increasing under quantum operations~\cite{NieChu00Book}, we have
\begin{equation}
d(\omega_C,\omega_C') \geq d(\mathcal{N}(\omega_C),\mathcal{N}(\omega_C')) = d(\omega_C,\tilde{\omega}_C)\ .
\end{equation}
On the other hand, we will show that $\omega_C'\succ\tilde{\omega}_C$. For any matrix $M$, let $\lambda(M)$ be the vector of its eigenvalues.  We want to show that $\lambda(\omega'_C)\succ\lambda(\tilde{\omega}_C)$. Recall that $\tilde{\omega}_C = \mathcal{N}(\omega_C')$ and, from the definition of $\mathcal{N}$, observe that the eigenvalues $\lambda(\tilde{\omega}_C)$ are precisely the diagonal elements of $\omega_C'$ in the basis $\lbrace |e_i\rangle \rbrace$.
Schur's theorem (\cite{MarshallOlkin79}, Chapter 9, Theorem B.1.) says that for any Hermitian matrix $M$, the diagonal elements of $M$ are majorized by $\lambda(M)$.
Therefore, $\lambda(\omega'_C) \succ \lambda(\tilde{\omega}_C)$ and thus $\omega'_C \succ \tilde{\omega}_C$. 
Making use of the initial assumption $\omega_C\otimes \mathbb{I}_S /m\succ \omega_C' \otimes |0\rangle\langle 0|_S$, we now see that
\begin{equation}
\omega_C\otimes \frac{1}{m}\mathbb{I}_S \succ \omega_C' \otimes |0\rangle\langle 0|_S \succ \tilde{\omega}_C \otimes |0\rangle\langle 0|_S\ ,
\end{equation}
which concludes the proof.
\end{proof}
We are now ready to establish our lower bound on $d_{m,n}$, where will use the fact established in the previous Lemma \ref{lem:min_diag_ord} that we can take both states to be diagonal in the same basis.
\begin{theorem}
Consider integers $m\geq 2$ and $n=m^a$ where $a \geq 1$. Then 
\begin{equation}
  d_{m,n} = \frac{m-1}{1+(m-1)a},
\end{equation}
where $d_{m,n}$ is defined in Eq.~\eqref{def:CatalystSet}. Hence, the family of catalyst states from Section~\ref{def:CatalystSet} is optimal.
\end{theorem}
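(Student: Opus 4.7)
The strategy is to establish the matching lower bound $d_{m,n}\ge (m-1)/(1+(m-1)a)$. By Lemma~\ref{lem:min_diag_ord}, it suffices to assume $\omega_C$ and $\omega_C'$ are simultaneously diagonal; I write the eigenvalues in decreasing order as $\omega_1\ge\cdots\ge\omega_n$ and $\omega_1'\ge\cdots\ge\omega_n'$, and set $D:=d(\omega_C,\omega_C')=\sum_i(\omega_i-\omega_i')_+$ and $T_k':=\sum_{i=1}^k\omega_i'$.

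The first step is to extract two structural consequences of the majorization $\omega_C\otimes\mathbb{I}/m\succ\omega_C'\otimes|0\rangle\langle 0|$. Comparing the largest eigenvalue of each side gives $\omega_1/m\ge\omega_1'$, whence $D\ge\omega_1-\omega_1'\ge(m-1)\omega_1'$. For each $s=1,\ldots,a$, the top $m^s$ eigenvalues of $\omega_C\otimes\mathbb{I}/m$ consist of $m^{s-1}$ blocks of multiplicity $m$ and sum to $\sum_{i=1}^{m^{s-1}}\omega_i$, while the top $m^s$ eigenvalues of $\omega_C'\otimes|0\rangle\langle 0|$ sum to $T_{m^s}'$, so the majorization yields $\sum_{i=1}^{m^{s-1}}\omega_i\ge T_{m^s}'$.

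The second step is a telescoping argument. Since $\sum_{i\in S}(\omega_i-\omega_i')\le\sum_i(\omega_i-\omega_i')_+=D$ for every $S\subseteq[n]$, choosing $S=[m^{s-1}]$ gives $\sum_{i=1}^{m^{s-1}}\omega_i\le T_{m^{s-1}}'+D$. Combined with the previous inequality, this produces $T_{m^s}'-T_{m^{s-1}}'\le D$ for every $s=1,\ldots,a$. A straightforward induction on $s$ (this being the mathematical induction referred to in the paper) then shows $T_{m^s}'\le\omega_1'+sD$ for all $0\le s\le a$. At $s=a$ the normalization $T_{m^a}'=1$ delivers the crucial lower bound $\omega_1'\ge 1-aD$.

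Combining $D\ge(m-1)\omega_1'$ with $\omega_1'\ge 1-aD$ yields $D\ge(m-1)(1-aD)$, which rearranges to exactly $D\ge(m-1)/(1+(m-1)a)$, matching the upper bound from the explicit construction. The main obstacle I anticipate is not the manipulation itself but identifying the correct family of majorization constraints to combine: the $k=1$ condition alone gives $D\ge(m-1)\omega_1'$ without any lower bound on $\omega_1'$, and the $k=m^a$ condition alone (normalization) gives merely $D\ge 1-T_{n/m}'$; the logarithmic-in-$n$ optimal bound emerges only by telescoping the entire tower of partial-sum inequalities at $k=m,m^2,\ldots,m^a$, which exploits the dyadic (base-$m$) self-similarity inherent in the state $\omega_C\otimes\mathbb{I}/m$.
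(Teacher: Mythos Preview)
Your argument is correct and is genuinely different from the paper's proof. The paper proceeds by induction on $a$: given an optimal pair $(\omega,\omega')$ in dimension $k=m^{a+1}$, it manufactures a feasible pair $(\sigma,\sigma')$ in dimension $m^{a}$ by truncating the tail mass $\delta=\sum_{i>k/m}\omega_i'$ from both states and renormalising, then uses the relation $d(\omega,\omega')=(1-\delta)d(\sigma,\sigma')$ together with $d(\omega,\omega')\ge\delta$ and the inductive hypothesis $d(\sigma,\sigma')\ge d_{m,m^a}$ to close the recursion $d_{m,k}\ge d_{m,m^a}/(1+d_{m,m^a})$. This requires a case split depending on whether the cut falls exactly on an eigenvalue boundary, with a local averaging trick to repair majorization in the bad case. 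Your route instead extracts the $a$ partial-sum inequalities $T'_{m^s}\le T'_{m^{s-1}}+D$ directly and telescopes them; this is considerably shorter, avoids constructing any auxiliary catalyst pair, and needs no case analysis. The paper's approach, on the other hand, makes the dimensional self-similarity explicit by literally producing a lower-dimensional feasible pair, which is perhaps conceptually illuminating even if technically heavier.

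One minor remark: your parenthetical ``this being the mathematical induction referred to in the paper'' is a misreading. The paper's induction is on the exponent $a$ (i.e., on the catalyst dimension), not on the index $s$ of your partial sums; your telescoping step is really just a finite sum, and the overall structure of the two proofs is quite distinct. Also, when you assert that both eigenvalue sequences may be taken in decreasing order simultaneously, you are implicitly using (beyond Lemma~\ref{lem:min_diag_ord}) the rearrangement fact that sorting both sequences the same way can only decrease $\sum_i|\omega_i-\omega_i'|$ while leaving the majorization condition (which depends only on the eigenvalue multisets) intact; the paper states this explicitly, and it would be worth a sentence in your write-up as well.
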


\begin{proof}
The majorization condition
\begin{equation}\label{eq:t3}
  \omega_C\otimes \frac{1}{m}\mathbb{I}_S \succ 
  \omega_C' \otimes |0\rangle\langle 0|_S
\end{equation}
only depends on the eigenvalues of $\omega$ and $\omega'$. For fixed eigenvalues, the trace distance $d(\omega,\omega')$ is minimized if the two states share the same eigenbasis and the eigenvalues are ordered in the same way, \textit{e.g.}, in decreasing order, as discussed in Lemma \ref{lem:min_diag_ord}. Hence, from now on we consider only diagonal states $\omega = \diag(\omega_1,\dotsc,\omega_n)$ and $\omega'=\diag(\omega'_1,\dotsc,\omega'_n)$, where $\omega_1 \ge \omega_2 \ge \dotsc \ge \omega_n$ and $\omega'_1 \ge \omega'_2 \ge \dotsc \ge \omega'_n$. Here, $\diag(\cdots)$ denotes the diagonal matrix with the corresponding diagonal elements.
To prove the theorem we only need to show that 
\begin{equation}
	d_{m,n} \ge \frac{m-1}{1+(m-1)a} 
\end{equation}
as the other inequality follows from the family of embezzling states exhibited in Section~\ref{sec:OptFamily}. We use induction on the power $a$. For the base case $a=1$, we need to show that 
$d_{m,m}\ge 1- {1}/{m}$. Consider any feasible solution $(\omega,\omega')$ in dimension $n=m$. From the majorization condition
\begin{equation}
  \left(\frac{\omega_1}{m},\dotsc,\frac{\omega_1}{m},
  \dotsc,
  \frac{\omega_m}{m},\dotsc,\frac{\omega_m}{m}\right) \succ
   (\omega'_1,\dotsc,\omega'_m,0,\dotsc,0)
\end{equation}
it follows that ${\omega_1}/{m} \ge \omega'_1$ and $\omega_i=0$ for $i\ge2$. Hence, $\omega_1=1$ and ${1}/{m} \ge \omega'_1$. Since $\omega'_1$ is the largest of the $m$ values $\omega'_i$, we get $\omega'_i = {1}/{m}$ for all $i$. Finally, a simple calculation reveals that $d(\omega,\omega') = 1-{1}/{m}$, which establishes the base case.

For the inductive step, we assume that 
\begin{equation}\label{indu}
	d_{m,n} = \frac{m-1}{1+(m-1) a}
\end{equation}	
	for some $n=m^a$ and aim to show that 
\begin{equation}	
	d_{m,k} = \frac{m-1}{1+(m-1) (a+1)}
\end{equation}	
 for $k=m^{a+1}$.
The main idea is to consider an optimal catalyst pair $(\omega,\omega')\in\mathcal{S}_{m,k}$ and from it construct a catalyst pair $(\sigma, \sigma')\in\mathcal{S}_{m,n}$ in dimension $n=m^a$. Since our construction will allow to relate $d(\sigma,\sigma')\ge d_{m,n}$ to $d(\omega,\omega')=d_{m,k}$, we then obtain a lower bound on $d_{m,k}$ in terms of $d_{m,n} $ as in Eq.\ (\ref{indu}).

Let us start by using the state pair that satisfies Eq.~\eqref{eq:t3} and achieves $d_{m,k}$, and from it derive some useful properties.
Firstly, pick $(\omega, \omega')\in\mathcal{S}_{m,k}$ so that $d(\omega,\omega')=d_{m,k}$. As before, without loss of generality, we assume that $\omega=\diag(\omega_1,\dotsc,\omega_k)$ and $\omega'=\diag(\omega_1',\dotsc,\omega'_k)$ where $\omega_1 \ge \dotsc \ge \omega_k$ and $\omega_1' \ge \dotsc \ge \omega_k'$. The majorization condition 
\begin{equation}\label{eq:maj}
  \omega\otimes \frac{1}{m}\mathbb{I}_m \succ \omega' \otimes |0\rangle\langle 0|
  \;\; \Leftrightarrow \;\;
  \left(\frac{\omega_1}{m},\dotsc,\frac{\omega_1}{m},
  \dotsc,
  \frac{\omega_k}{m},\dotsc,\frac{\omega_k}{m}\right) \succ
   (\omega'_1,\dotsc,\omega'_k,0,\dotsc,0)
\end{equation}
again implies that $\omega_1 > \omega'_1$ and $\omega_i = 0$ for $i>{k}/{m}=m^a$. To further simplify matters, we can also assume that $\omega_i \le \omega'_i$ for all $i\ge 2$. This is because we can always replace $\omega$ with $\tilde{\omega}=\diag(\tilde{\omega}_1,\dotsc,\tilde{\omega}_k)$, where
\begin{equation}
  \tilde{\omega}_i = \begin{cases}
  \omega_i' & \text{if }\omega_i> \omega'_i,\\
  \omega_i  & \text{otherwise},
  \end{cases}
\end{equation}
for $i\ge 2$ and $\tilde{\omega}_1$ is chosen so that $\sum_i \tilde{\omega}_i = 1$. In essence, all the majorization advantage of $\omega$ against $\omega'$ can be piled upon the first, largest eigenvalue of $\omega$. This replacement is valid since $(\tilde{\omega},\omega')$ still satisfies the majorization condition. Furthermore,
\begin{equation}
d(\omega,\omega')=\sum_{i:\omega_i > \omega_i'} \omega_i-\omega_i'=d(\tilde{\omega},\omega')
\end{equation} 
implies that the distance is unchanged.

Subsequently, we proceed to bound $d_{m,n}$.
To do this, construct a catalyst pair $(\sigma,\sigma')\in \mathcal{S}_{m,n}$ in dimension $n = m^a = {k}/{m}$. Essentially, this is done by directly applying a cut to the dimension of the final catalyst state $\omega'$, reducing it to having dimension ${k}/{m}=n$. Similarly, the same amount of probability is cut from the initial state, and both states are renormalized. 

Let us decribe this in more detail: denote $\delta = \sum_{i> k/m} \omega'_i$ and pick index $s$ and value $\hat{\omega}_s \le \omega_s$ so that $\sum_{i<s} \omega_i + \hat{\omega}_s =1-\delta$. Note that $s \le {k}/{m^2}$, since the majorization condition Eq.~\eqref{eq:maj} implies that 
\begin{equation}
\sum_{i\leq k/m^2} \sum_{j=1}^m \frac{\omega_i}{m} = \sum_{i\le k/m^2} \omega_i \ge \sum_{i\le k/m} \omega'_i = 1 - \delta.
\end{equation}
This inequality is obtained by summing up the first $k/m$ elements of both distributions in the L.H.S. and R.H.S. of Eq.~\eqref{eq:maj}. We now define  
\begin{align}
  \sigma &= \frac{1}{1-\delta} \diag 
  \big( \omega_1, \cdots, \omega_{s-1}, \hat{\omega}_{s},0,\cdots\cdots,0 \big),\\
  \sigma' &= \frac{1}{1-\delta} \diag 
  \big( \omega_1', \cdots, \omega_{s-1}', \omega_{s}',\omega_{s+1}',\cdots, \omega_{k/m}' \big).
\end{align}
Since $\sum_{i<s}\omega_i + \hat{\omega}_s = \sum_{i\le k/m} \omega'_i = 1 - \delta$ the states $\sigma$ and $\sigma'$ are properly normalized. To establish that $(\sigma,\sigma')\in \mathcal{S}_{m,n}$, we need to show that the majorization condition holds true. We consider two separate cases: when $\hat{\omega}_s = \omega_s$, and when $\hat{\omega}_s\neq \omega_s$.

If $\hat{\omega}_s=\omega_s$, then the inequalities in the majorization condition for $(\sigma,\sigma')$ have already been enforced by the majorization condition of $(\omega,\omega')$. Hence, $(\sigma,\sigma')$ is a valid catalyst pair in dimension 
$n={k}/{m}$, \textit{i.e.}, $(\sigma,\sigma')\in \mathcal{S}_{m,k}$. Let us now make the following two observations.
\begin{enumerate}
\item $\bm{d(\omega,\omega') \ge \delta}$. To see this, recall that $\omega_i=0$ for $i>{k}/{m}=n$, and thus
\begin{equation}
  d(\omega,\omega') = \sum_{i:\omega_i'>\omega_i} \omega_i'-\omega_i \geq \sum_{i>k/m} \omega_i = \delta.
\end{equation}
\item $\bm{d(\omega,\omega') = (1-\delta)d(\sigma,\sigma')}$. To see this, note that
\begin{equation}
  \frac{d(\omega,\omega')}{1-\delta} = \frac{1}{1-\delta}
  \sum_{i: \omega_i>\omega'_i} \omega_i-\omega'_i = \frac{\omega_1 - \omega'_1}{1-\delta}
  = d(\sigma,\sigma')
\label{eq:DistX}
\end{equation}
since only the first diagonal element of $\sigma$ is strictly larger than the corresponding diagonal element of $\sigma'$.
\end{enumerate}
Combining observations (1) and (2) gives 
\begin{equation}
  d_{m,k} = d(\omega,\omega') = (1-\delta) d(\sigma,\sigma')
  \ge  \Big[ 1-d(\omega,\omega') \Big] d(\sigma,\sigma')
  \ge (1-d_{m,k}) d_{m,n},
\end{equation}
since 
\begin{equation}
	d(\sigma,\sigma')\ge d_{m,n} = \frac{m-1}{1+(m-1)a}. 
\end{equation}
Rearranging gives us
\begin{equation}
  d_{m,k} \ge \frac{d_{m,n}}{1+d_{m,n}} = \frac{m-1}{1+(m-1)(a+1)} 
\end{equation}
and we have completed the inductive step.

If $\hat{\omega}_s\neq \omega_s$, then the majorization inequalities involving $\hat{\omega}_s$ might fail to hold. Therefore, instead of $(\sigma,\sigma')$ we consider the following, slightly different, pair of states
\begin{align}
  \zeta &=\sigma = \frac{1}{1-\delta} \diag 
  \big(\omega_1, \cdots, \omega_{s-1}, \hat{\omega}_{s},0,\cdots,0 \big),\\
  \zeta' &= \frac{1}{1-\delta} \diag 
  \big( \omega_1', \cdots, \omega_{(s-1)m}', \mathit{l},\cdots,\bar{\omega}, \omega'_{sm+1}, \cdots,\omega_{k/m}' \big),
\end{align}
where
\begin{equation}
	\mathit{l}=\frac{1}{m}\big(\omega'_{(s-1)m+1}+\dotsc+\omega'_{sm}\big).
\end{equation}
The diagonal elements of $\zeta'$ are still in descending order, and the state is properly normalized. To argue that $(\zeta,\zeta')$ is a valid pair of catalyst states, we need to verify the majorization inequalities that are not directly implied by the majorization condition for $(\omega,\omega')$. That is, we need to verify that for all $1\le j \le m$,
\begin{equation}
  C+\frac{j}{m} \hat{\omega}_s \ge C' + j\mathit{l},
\label{eq:Majcond}
\end{equation}
where $C=\sum_{i=1}^{s-1} \omega_i$ and $C'=\sum_{i=1}^{(s-1)m} \omega_i'$. 

We can see that this is true for the state pair $(\zeta,\zeta')$ because in this regime of Eq.~\eqref{eq:Majcond}, both sides increase linearly with the indices $j$, and for the endpoints $j=0$ and $j=m$, the L.H.S. is higher than the R.H.S., which is guaranteed by the majorization condition for $(\omega,\omega')$,
\begin{equation}
  C\geq C' 
  \;\; \text{and} \;\; 
  C+\hat{\omega}_s  \geq C'+ m \mathit{l}.
\end{equation}
Therefore, $(1-p)C+p(C+\hat{\omega}_s) \geq (1-p)C' + p (C'+m\mathit{l})$ for any $0\le p \le 1$. Taking $p={j}/{m}$ yields the desired inequality~(\ref{eq:Majcond}) and hence $(\zeta,\zeta')$ is a valid catalyst pair.
Lastly, note that reasoning similar to the one in Equation~\eqref{eq:DistX} can be used to deduce that 
\begin{equation}
 \frac{d(\omega,\omega')}{1-\delta} = d(\zeta,\zeta').
 \end{equation}
 Therefore, $d(\zeta,\zeta') = d(\sigma,\sigma')$ and we can use the argument from the previous case to complete the inductive step.
%
%
By this proof of induction we have shown that $d_{m,n} \geq {m-1}/({1+(m-1)a})$ for all $m, n=m^a$ and $a\geq 1$. This together with the conclusion in Section \ref{sec:OptFamily} that 
$d_{m,n}\leq {m-1}/({1+(m-1)a})$ proves that 
\begin{equation}
d_{m,n}=\frac{m-1}{1+(m-1)a},
\end{equation}
and the state pair described in Eq.~\eqref{eq:finalcat} and \eqref{eq:initialcat} is optimal.
\end{proof}

\section{Limits of thermal embezzling from constraints on dimension}\label{sec:DimBounds}
\subsection{Diagonal states}\label{subsec:thermalcat_diag}
In our work, we use two particular quantities, which are the R{\'e}nyi divergences for $\alpha={1}/{2}$ and $\alpha =\infty$, which for classical probability distributions have the following form:
\begin{eqnarray}\label{eq:dhalf&dmin}
&D_{1/2}(p \| q)= -2 \log \sum_i \sqrt{p_i q_i}, \quad
& D_\infty(p \| q) = \lim_{\alpha\to \infty}D_\alpha(p \| q)=\log \max_i \frac{p_i}{q_i}.
\end{eqnarray}
As mentioned in Section \ref{subsec:d_alpha}, given Hamiltonians $H_S$ and $H_C$, it suffices to consider
\begin{equation}
\omega_C \otimes \tau_S \rightarrow \omega'_C \otimes \Pi^S_{\rm max}.
\end{equation}
Here, we prove whenever the dimension of the catalyst (and system) are finite, there exists a lower bound on the accuracy of thermal embezzling. Such a bound is dependent on $H_S$ and $H_C$. To do so, consider the problem 
\begin{eqnarray}\label{eq:embezz_prob}
\varepsilon = &&\min \quad \frac{1}{2}\|\omega_C-\omega'_C\|_1 \\
&&{\rm s.t.}\quad ~\omega_C \otimes \tau_S \rightarrow \omega'_C \otimes \Pi^S_{\rm max},~ 0\leq \omega,\sigma\leq \mathcal{\id}.\nonumber
\end{eqnarray}

In Ref.~\cite{secondlaw}, it has been shown that for initial and target states commuting with the Hamiltonian $H_S$, it is sufficient to consider catalyt states commuting with $H_C$. Therefore, since $\tau_S$ and $\Pi_{\max}^S$ both commute with $H_S$, it is sufficient to consider input and output catalysts states which are diagonal in the basis of $H_C$. Since all $\alpha$ R{\'e}nyi divergences are thermal monotones according to Lemma \ref{lem:secondlaw}, in particular the min-relative entropy ($D_\infty$), for $\alpha\rightarrow\infty$, 
\begin{equation}
	D_\infty (\rho\|\rho')= \max_i\log \frac{\rho_i}{\rho_i'}
\end{equation}
where $\rho_i$ and $\rho'_i$ are the eigenvalues of $\rho,\rho'$ respectively. Therefore, satisfying the thermo-majorization conditions in Eq. \eqref{eq:embezz_prob} implies that
\begin{align*}
D_\infty (\omega_C\otimes\tau_S\|\tau_{CS}) &\geq D_\infty (\omega'_C\otimes \Pi^S_{\rm max}\|\tau_{CS}).
\end{align*}
To further simplify this expression, note that $\tau_{CS}=\tau_C\otimes\tau_S$ and that $D_\alpha (\rho\otimes\rho'\|\sigma\otimes\sigma')=D_\alpha(\rho\|\sigma)+D_\alpha(\rho'\|\sigma')$. The additivity of R{\'e}nyi divergences under tensor products holds for all states. Furthermore, $D_\alpha (\rho\|\rho)=0$ for any $\rho$. Therefore, we arrive at the expression
\begin{equation}\label{eq:dalpha_sep}
D_\infty (\omega_C\|\tau_C) + 0 \geq D_\infty (\omega'_C\|\tau_C) + \log \frac{Z_S}{e^{-\beta E^S_{\rm max}}},
\end{equation}
where $Z_S$ is the partition function of the system. The spectral values of $\omega_C$ and $\omega'_C$ are denoted as
$\{\omega_j\}$ and $\{\omega'_j\}$, respectively.
Using the definition of $D_\infty$ as shown in Eq.~\eqref{eq:dhalf&dmin}, we obtain
\begin{equation*}
\max_i \frac{\omega_i}{\tau_i}  \geq  \frac{Z_S}{e^{-\beta E^S_{\rm max}}}
\max_j \frac{\omega'_j}{\tau_j},
\end{equation*}
where 
\vspace{-0.5cm}
\begin{equation}
	\tau_j = \frac{e^{-\beta E_j^{C}}}{Z_C} 
\end{equation}
are the eigenvalues of the thermal state for the catalyst, for the energy eigenstate with energy eigenvalue $E_i^C$, with normalization $Z_C$, the partition function of the catalyst. 
Since $\hat{\varepsilon}$ is the minimum trace distance between states $\omega_C, \omega'_C$, and $D_\infty$ depends only on the maximum of $\omega'_i/\tau_i$ across the distribution, the optimal strategy to increase $D_\infty$ while going from $\omega'_C$ to $\omega_C$ is to increase a specific $\omega'_i$ by an amount $\hat{\varepsilon}$.
Therefore, we can consider a relaxation of Eq.~\eqref{eq:embezz_prob}
\begin{eqnarray}
\hat{\varepsilon} = &&\min \quad \frac{1}{2}\|\omega_C-\omega'_C\|_1\\
&&{\rm s.t.}\quad ~ \max_i \frac{\omega'_i+\hat{\varepsilon}}{\tau_i} \geq   \frac{Z_S}{e^{-\beta E^S_{\rm max}}}
\max_j \frac{\omega'_j}{\tau_j}  , \\
&&\qquad\quad \forall j, 0 < \omega'_j \leq 1.
\end{eqnarray}

In the next lemma, we show that 
$\varepsilon \geq \hat{\varepsilon} \geq \delta > 0$ whenever $E^{C}_{\rm max}, E^{S}_{\rm max} <\infty$.

\begin{lemma}[Lower bound to error in catalysis]\label{lem:lb_err_catalysis}
Consider system and catalyst Hamiltonians which are finite-dimensional, and denote $\lbrace E_i^S \rbrace_{i=1}^m$, $\lbrace E_i^C \rbrace_{i=1}^n$ to be the set of energy eigenvalues respectively. Then for some fixed $E_{\rm max}^{C}, E^{S}_{\rm max}$, consider any probability distribution $r$ (which corresponds to eigenvalues of a catalyst $\omega$), and $\hat{\varepsilon}$ such that
\begin{equation}
	\max_i \frac{r_i+\hat{\varepsilon}}{\tau_i} \geq 
	\frac{Z_S}{e^{-\beta E^S_{\rm max}}}\max_j \frac{r_j}{\tau_j} , 
	~ \forall j, 0 < r_j \leq 1,
\end{equation}
where $\tau_i = {e^{-\beta E_i^{C}}}/{Z_C}$. Note that index $i$ runs over all energy levels $E_i^C$. Then 
\begin{equation}
	\hat{\varepsilon} \geq \left(\frac{Z_S}{e^{-\beta E^S_{\rm max}}}-1\right)\frac{e^{-\beta E^C_{\rm max}}}{Z_C} \neq 0.
\end{equation}
In other words, thermal embezzling of diagonal states with arbitrary accuracy is not possible.
\end{lemma}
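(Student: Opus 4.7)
The plan is to introduce the shorthand $K := Z_S/e^{-\beta E^{S}_{\max}}$, which satisfies $K \geq 1$ by construction (it is a sum of positive Boltzmann weights divided by one of its terms). The hypothesis then reads $\max_i (r_i+\hat\varepsilon)/\tau_i \geq K \max_j r_j/\tau_j$, and the claim reduces to $\hat\varepsilon \geq (K-1)\min_k \tau_k$. The crucial observation I would exploit is the elementary termwise upper bound
\[
\frac{r_i+\hat\varepsilon}{\tau_i} \;=\; \frac{r_i}{\tau_i} + \frac{\hat\varepsilon}{\tau_i} \;\leq\; \max_j \frac{r_j}{\tau_j} + \frac{\hat\varepsilon}{\min_k \tau_k},
\]
valid for every $i$. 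Taking the maximum over $i$ on the left and combining with the hypothesis isolates $\hat\varepsilon$, giving
\[
\frac{\hat\varepsilon}{\min_k \tau_k} \;\geq\; (K-1)\,\max_j \frac{r_j}{\tau_j}.
\]

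To finish, I would use that both $r$ and $\tau$ are probability distributions on the same index set, so $\sum_j r_j = \sum_j \tau_j = 1$; this forces $\max_j r_j/\tau_j \geq 1$, because otherwise $r_j < \tau_j$ for every $j$ and normalization would fail. Plugging this in yields $\hat\varepsilon \geq (K-1)\min_k \tau_k$. The last step is to identify $\min_k \tau_k$ explicitly for the Gibbs distribution $\tau_k = e^{-\beta E^C_k}/Z_C$: the minimum is attained at the index corresponding to the largest catalyst energy $E^C_{\max}$, producing $\min_k \tau_k = e^{-\beta E^C_{\max}}/Z_C$ and hence exactly the bound in the statement.

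There is no genuinely hard step here; the whole argument collapses to a couple of lines of algebra once one recognizes that perturbing any single coordinate of $r$ by $\hat\varepsilon$ can change $\max_i r_i/\tau_i$ by at most $\hat\varepsilon/\min_k \tau_k$. The only subtlety worth flagging is conceptual: the bound becomes vacuous precisely when $\min_k \tau_k \to 0$, which is the regime of unbounded catalyst dimension or unbounded energy spectrum. This is exactly why the much more involved splitting-and-duality technique of Section~\ref{sec:ELB} is needed to treat infinite-dimensional catalysts, whereas the present finite-dimensional case admits this quick proof.
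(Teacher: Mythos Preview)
Your proof is correct and follows essentially the same approach as the paper's: the paper bounds $\max_i (r_i+\hat\varepsilon)/\tau_i \leq \max_i r_i/\tau_i + \max_i \hat\varepsilon/\tau_i$ and then invokes $D_\infty(r\|\tau)\geq 0$ to get $\max_j r_j/\tau_j \geq 1$, which is exactly your termwise bound and normalization argument in slightly different packaging. The identification of $\min_k \tau_k = e^{-\beta E^C_{\max}}/Z_C$ and the final assembly are identical.
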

\begin{proof}
Firstly, let $r^*, \tau^*$ indicate the pair such that ${r^*}/{\tau^*}=\max_j {r_j}/{\tau_j}$. Then 
\begin{align*}
\max_i \frac{r_i}{\tau_i} + \max_i \frac{\hat{\varepsilon}}{\tau_i} & \geq \max_i \frac{r_i + \hat{\varepsilon}}{\tau_i} \geq \frac{r^*}{\tau^*} \frac{Z_S}{e^{-\beta E^S_{\rm max}}}.
\end{align*}
The first term of L.H.S. is equal to $r^*/\tau^*$, and therefore can be grouped with the R.H.S. to form
\begin{align*}
\max_i \frac{\hat{\varepsilon}}{\tau_i} &\geq \frac{r^*}{\tau^*} \left(\frac{Z_S}{e^{-\beta E^S_{\rm max}}} -1 \right) \geq \frac{Z_S}{e^{-\beta E^S_{\rm max}}} -1,
\end{align*}
since we know that $D_\infty (r\|q)=\log\max_i r_i/\tau_i = \log r^*/\tau^* \geq 0$, therefore $r^*/\tau^*\geq 1$. Finally, taking the maximization of $1/\tau_i$ over $i$ gives 1/$\tau_{\min}$, recall that $\tau_i$ corresponds to probabilities of the thermal state being in the eigenstate with energy $E_i$. Therefore, $\tau_{\min} = e^{-\beta E_{\max}^C}/Z_C$, and we get
\begin{equation}
\hat{\varepsilon} \geq \left(\frac{Z_S}{e^{-\beta E^S_{\rm max}}} -1 \right) \frac{e^{-\beta E^C_{\rm max}}}{Z_C}.
\end{equation}
\end{proof}
\subsection{Arbitrary states}\label{subsec:thermalcat_arb}
The case of arbitrary states are treated separately, since our Lemma \ref{lem:univ_emb} on universal embezzlers hold only for diagonal states, where necessary and sufficient conditions are known for state transformations. Nevertheless, since the monotonicity of $D_\alpha$ is necessary for arbitrary state transformations $\rho_S\rightarrow\rho_S'$, one can use techniques very similar to those in Section \ref{subsec:thermalcat_diag} to lower bound the embezzling error, if we minimize over diagonal catalysts.

More precisely, denote $\varepsilon (\rho_S,\rho_S')$ to be the solution of
\begin{equation}\label{eq:arbstatedalpha}
\begin{aligned}
\min  &\quad\frac{1}{2}\|\omega_C-\omega'_C\|_1 \\
{\rm s.t.} &\quad D_\infty (\omega_C\otimes\rho_S\|\tau_{CS}) \geq D_\infty (\omega'_C\otimes \rho_S'\|\tau_{CS}) ,~ 0\leq \omega,\sigma\leq \mathcal{\id}.
\end{aligned} 
\end{equation}

Recall that $\tau_{CS}=\tau_C\otimes\tau_S$, and that $D_\alpha$ is additive under tensor products. Therefore, by defining
\begin{equation}
\kappa_1 (\rho_S,\rho_S'):=D_\infty(\rho_S'\|\tau_S)-D_\infty(\rho_S\|\tau_S),
\end{equation}
we can rearrange the first constraint in Eq.~\eqref{eq:arbstatedalpha} 
\begin{equation}
D_\infty (\omega_C\|\tau_{C}) \geq D_\infty (\omega'_C|\tau_{C})+ \kappa_1(\rho_S,\rho_S').
\end{equation} 
Note that this is almost equivalent to Eq.~\eqref{eq:dalpha_sep}, except the constant $\log Z_S/e^{-\beta E_{\rm max}^S}$ previously is now replaced with $\kappa_1(\rho_S,\rho_S')$. By following the same steps used to prove Lemma \ref{lem:lb_err_catalysis}, we obtain a lower bound depending on $\rho_S, \rho_S'$.

\begin{lemma}\label{lem_err_catalysis_arb}
Consider system and catalyst Hamiltonians which are finite-dimensional, and denote $\lbrace E_i^S \rbrace_{i=1}^m$ and $\lbrace E_i^C \rbrace_{i=1}^n$ to be the set of energy eigenvalues respectively. Then for some fixed $ 0\leq E_{\rm max}^{C}, E^{S}_{\rm max}$, consider any probability distribution $r$ (which corresponds to eigenvalues of a catalyst $\omega$), and $\hat{\varepsilon}$ such that
\begin{equation}
	\max_i \frac{r_i+\hat{\varepsilon}}{\tau_i} \geq 
	2^{\kappa_1(\rho_S,\rho_S')} \cdot \max_j \frac{r_j}{\tau_j} , 
	~ \forall j, 0 < r_j \leq 1,
\end{equation}
where $\tau_i = {e^{-\beta E_i^{C}}}/{Z_C}$ and $\kappa_1 (\rho_S,\rho_S')=D_\alpha(\rho_S'\|\tau_S)-D_\alpha(\rho_S\|\tau_S)$. Note that index $i$ runs over all energy levels $E_i^C$. Then 
\begin{equation}
	\hat{\varepsilon} \geq \left[2^{\kappa_1(\rho_S,\rho_S')}-1\right]\frac{e^{-\beta E^C_{\rm max}}}{Z_C} \neq 0.
\end{equation}
This implies thermal embezzling with arbitrary accuracy, using a diagonal catalyst is not possible.
\end{lemma}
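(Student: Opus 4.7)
The plan is to mirror the strategy used for Lemma \ref{lem:lb_err_catalysis} almost verbatim, since the hypothesis of Lemma \ref{lem_err_catalysis_arb} has exactly the same structure, differing only in that the multiplicative constant $Z_S/e^{-\beta E^S_{\max}}$ has been replaced by $2^{\kappa_1(\rho_S,\rho_S')}$. The derivation of the hypothesis itself follows the reasoning sketched right before the lemma statement: one applies monotonicity of $D_\infty$ to the transformation $\omega_C\otimes\rho_S\to\omega_C'\otimes\rho_S'$, uses additivity of $D_\alpha$ under tensor products together with $D_\infty(\tau_S\|\tau_S)=0$, and rearranges the system-dependent piece into $\kappa_1(\rho_S,\rho_S')$. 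As in Section \ref{subsec:thermalcat_diag}, one then argues that the optimal way to increase $\max_i r_i/\tau_i$ within an $\hat{\varepsilon}$-trace-ball is to add all of $\hat{\varepsilon}$ to a single eigenvalue, which yields the assumed inequality.

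Granted the hypothesis, I would proceed as follows. First, let $(r^*,\tau^*)$ denote the pair attaining $\max_j r_j/\tau_j$. Using the subadditivity
\begin{equation}
\max_i \frac{r_i+\hat{\varepsilon}}{\tau_i} \;\leq\; \max_i \frac{r_i}{\tau_i} + \max_i \frac{\hat{\varepsilon}}{\tau_i} \;=\; \frac{r^*}{\tau^*} + \hat{\varepsilon}\max_i \frac{1}{\tau_i},
\end{equation}
combined with the hypothesis, one obtains
\begin{equation}
\hat{\varepsilon}\,\max_i \frac{1}{\tau_i} \;\geq\; \frac{r^*}{\tau^*}\bigl(2^{\kappa_1(\rho_S,\rho_S')}-1\bigr).
\end{equation}

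Next, I would invoke the normalization bound $r^*/\tau^*\geq 1$, which follows because $r$ and $\tau$ are both probability distributions (if every $r_i<\tau_i$ they could not both sum to one), equivalently because $D_\infty(r\|\tau)\geq 0$. Finally, $\max_i 1/\tau_i = 1/\tau_{\min} = Z_C\,e^{\beta E^C_{\max}}$, so rearranging yields
\begin{equation}
\hat{\varepsilon} \;\geq\; \bigl[2^{\kappa_1(\rho_S,\rho_S')}-1\bigr]\,\frac{e^{-\beta E^C_{\max}}}{Z_C},
\end{equation}
which is the claimed bound. Nonnegativity on the right-hand side follows from $\kappa_1(\rho_S,\rho_S')\geq 0$ whenever $\rho_S\to\rho_S'$ is non-trivial in the sense that the target is farther from thermal than the input.

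There is no real obstacle beyond being careful about two conceptual points. The first is justifying that it suffices to consider diagonal catalysts and the single ``worst-case'' reduction; unlike the diagonal setting of Lemma \ref{lem:univ_emb}, here one cannot invoke universal embezzling, and so the bound must be stated per-pair $(\rho_S,\rho_S')$ through $\kappa_1$. The second is the ``optimal-strategy'' step: one needs that concentrating the entire trace-distance budget $\hat{\varepsilon}$ on one index is indeed the best way to boost $D_\infty$, which is immediate since $D_\infty$ is determined by the single largest ratio $r_i/\tau_i$ and moving mass onto any non-maximizing entry of $r$ can only increase that maximum. With these two points noted, the rest is a one-line replacement in the proof of Lemma \ref{lem:lb_err_catalysis}.
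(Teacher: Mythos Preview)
Your proposal is correct and follows exactly the approach the paper takes: the paper itself does not give a separate proof here but simply remarks that one obtains the bound ``by following the same steps used to prove Lemma~\ref{lem:lb_err_catalysis}'' with the constant $Z_S/e^{-\beta E^S_{\max}}$ replaced by $2^{\kappa_1(\rho_S,\rho_S')}$. Your write-up reproduces those steps (subadditivity of the max, the bound $r^*/\tau^*\geq 1$, and $\tau_{\min}=e^{-\beta E_{\max}^C}/Z_C$) and correctly flags the state-dependence and the restriction to diagonal catalysts, both of which the paper also notes.
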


Comparing Lemma \ref{lem:lb_err_catalysis} and Lemma \ref{lem_err_catalysis_arb} which are very similar, one sees that for non-diagonal states Lemma \ref{lem_err_catalysis_arb} gives a state-dependent lower bound on the embezzling error. However for diagonal states, the bound in Lemma \ref{lem:lb_err_catalysis} can be made state-independent because of the existence of universal embezzlers. 

\subsection{Relation to energy constraints}
Rather than bounding the dimension of the catalyst, one can ask if restrictions on other physical quantities such as the average energy of the catalyst would prevent indefinitely accurate embezzling from occurring. While this by itself is an independently interesting problem, we can first note that such restrictions are sometimes related to restrictions on the dimension. In one direction this is straightforward: if the catalyst is finite-dimensional, then the average energy and all other moments of energy distribution would be finite as well. 

Here, we show that by restricting the first and second moments of the energy distribution of the catalyst to be finite, this implies that the states involved are always close to finite-dimensional states. In other words, if we consider the set of catalysts such that the average and variance of energy is finite, then for any such catalyst state from this set, there always exists a finite-dimensional state $\varepsilon$-close to it. This can be shown by invoking a simple theorem, namely the Chebyshev inequality which says that for given any finite non-zero error $\varepsilon$, the support of the energy distribution must be finite.

\begin{lemma}[Chebyshev inequality]\label{lem:chebyshev}
Consider a random variable $X$ with finite mean $\bar{X}$ and finite variance $\sigma_X^2$, then for all $k>0$,
\begin{equation}
\pp [|X-\bar{X}|\geq k	] \leq \frac{\sigma_X^2}{k^2}.
\end{equation}
\end{lemma}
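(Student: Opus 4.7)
The plan is to prove this via a standard two-step argument: first establish Markov's inequality for nonnegative random variables, then apply it to the squared deviation $(X-\bar X)^2$. This is the textbook route and I do not foresee any real obstacle, since both mean and variance are assumed finite, so all expectations appearing are well defined.

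First I would state Markov's inequality: for any nonnegative random variable $Y$ and any $a>0$, $\pp[Y\geq a]\leq \mathbb{E}[Y]/a$. The proof uses the pointwise bound $a\,\mathbbm{1}[Y\geq a]\leq Y\,\mathbbm{1}[Y\geq a]\leq Y$, where $\mathbbm{1}[\cdot]$ is the indicator function; taking expectations on both ends and dividing by $a$ yields the claim. This works uniformly for discrete and continuous $X$, which is convenient given that the lemma is used in the paper to reason about energy distributions of (possibly infinite-dimensional) catalysts.

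Next I would set $Y:=(X-\bar X)^2$, which is nonnegative, and take $a:=k^2>0$. Markov's inequality then gives
\begin{equation}
\pp\bigl[(X-\bar X)^2\geq k^2\bigr]\;\leq\;\frac{\mathbb{E}[(X-\bar X)^2]}{k^2}\;=\;\frac{\sigma_X^2}{k^2}.
\end{equation}
The final step is the observation that the events $\{(X-\bar X)^2\geq k^2\}$ and $\{|X-\bar X|\geq k\}$ coincide (since $k>0$ and the square root is monotone on nonnegatives), so their probabilities are equal, yielding exactly the stated bound.

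There is essentially no hard step here; the only thing to be careful about is the measure-theoretic statement of Markov's inequality so that the conclusion applies to arbitrary real-valued $X$ with the stated moment conditions, rather than only to, say, discrete $X$ supported on energy eigenvalues. Since the paper invokes the inequality afterwards to argue that a finite mean and finite variance of the energy distribution force the bulk of the distribution to sit inside a bounded window (and hence effectively inside a finite-dimensional subspace up to small error), the generality afforded by the Markov-based proof is exactly what is needed.
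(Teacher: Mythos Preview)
Your proof is correct and is the standard textbook derivation via Markov's inequality applied to $(X-\bar X)^2$. The paper itself does not supply a proof of this lemma at all; it merely states Chebyshev's inequality as a known result and then applies it in the subsequent theorem, so there is nothing to compare against beyond noting that your argument fills in exactly what the paper takes for granted.
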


\begin{theorem}[Chebyshev inequality applied to energy distributions]
Consider a probability distribution $p$ over some non-degenerate energy values $E$, where both mean $\bar{E}=\langle E\rangle$, and variance $ \sigma_E^2=\langle [E-\bar{E}]^2\rangle$ are finite. Then for any $\varepsilon > 0$, there exists $E_{\rm max} < \infty$ such that $\pp [E\geq E_{\rm max}] \leq \varepsilon$.
\end{theorem}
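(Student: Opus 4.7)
The plan is to apply Chebyshev's inequality (Lemma~\ref{lem:chebyshev}) essentially verbatim, since the finiteness of $\bar{E}$ and $\sigma_E^2$ is exactly the hypothesis needed. The only small issue is that Chebyshev bounds a two-sided tail $\pp[|E-\bar{E}|\geq k]$, while the theorem asks only for the upper tail $\pp[E\geq E_{\max}]$; this is handled by the obvious inclusion $\{E\geq \bar{E}+k\}\subseteq \{|E-\bar{E}|\geq k\}$.

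Concretely, I would first fix the target error $\varepsilon>0$ and choose $k=\sigma_E/\sqrt{\varepsilon}$, which is finite because $\sigma_E^2<\infty$ by assumption. Next I would define $E_{\max}:=\bar{E}+k$, which is finite because $\bar{E}<\infty$ by assumption. Then I would apply Lemma~\ref{lem:chebyshev} with the random variable $X=E$ distributed according to $p$ to obtain
\begin{equation}
\pp[E\geq E_{\max}]~=~\pp[E-\bar{E}\geq k]~\leq~\pp[|E-\bar{E}|\geq k]~\leq~\frac{\sigma_E^2}{k^2}~=~\varepsilon,
\end{equation}
which is the desired conclusion.

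There is essentially no hard step here; the statement is really a corollary of Chebyshev. The only point worth emphasizing in the write-up, and the only place one might slip, is that one needs the non-degeneracy of the energy values (or more precisely, a well-defined probability over the outcomes $E$) to cast the problem in the form of a genuine real-valued random variable so that Chebyshev applies; once this is granted, the rest is a one-line computation. The physical content -- that a finite-variance, finite-mean energy distribution is effectively supported on a finite-dimensional subspace up to arbitrarily small $\varepsilon$-weight -- then follows immediately by identifying the cut-off $E_{\max}$ with a dimension cut-off in the energy eigenbasis.
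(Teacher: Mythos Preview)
Your proposal is correct and matches the paper's own proof essentially line for line: both fix $k=\sigma_E/\sqrt{\varepsilon}$, set $E_{\max}=\bar{E}+k$, and bound the one-sided tail by the two-sided Chebyshev tail. The only difference is cosmetic---you spell out the inclusion $\{E\geq \bar{E}+k\}\subseteq\{|E-\bar{E}|\geq k\}$ and the substitution $\sigma_E^2/k^2=\varepsilon$ explicitly, whereas the paper leaves these implicit.
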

\begin{proof}
For any $\varepsilon >0$, let some $k={\sigma_E}/{\sqrt{\varepsilon}}$. Denote $E_{\rm max}= \bar{E} + k$. Then by Lemma~\ref{lem:chebyshev}, 
\begin{equation}
\pp [E\geq E_{\rm max}] \leq \pp [|E-\bar{E}|\geq k] \leq \varepsilon.
\end{equation}
\end{proof}

\section{Limits of thermal embezzling from energy constraints}\label{sec:ELB}
In this section we provide lower bounds for the error in catalysis, given constraints on the average energy of the catalyst state. We do so by adding a constraint on the average energy of the catalyst to the problem stated in Eq.~\eqref{eq:embezz_prob}. By looking at the R{\'e}nyi divergence for $\alpha = {1}/{2}$, we can show a non-zero lower bound on the catalytic error, for cases where the partition function of the catalyst Hamiltonian $Z_C$ is finite. This minimal assumption covers most physical scenarios, especially if we want the thermal state to be a trace class operator to begin with. Again we start with diagonal states, then later generalize to arbitrary states.

\subsection{Diagonal states}\label{subsec:diag_ELB}
Firstly, let us recall the problem stated in Eq.\ \eqref{eq:embezz_prob}. We aim at minimizing the trace distance between all input and output catalyst states, such that the most significant
 thermal embezzlement of a smaller system $S$ can be achieved. We denote again the initial and final catalysts by $\omega_C$ and $\omega'_C$
 with spectral values $\{\omega_j\}$ and $\{\omega'_j\}$.
 Again, by restricting ourselves to look at catalyst diagonal in the Hamiltonian basis, and by invoking only the thermal monotone $D_{1/2} (.\|.)$, one can find the alternative relaxed problem
\begin{eqnarray}\label{eqarray1}
	 \text{min} && \quad\frac{1}{2}\sum_{j=1}^\infty |\omega_j -\omega'_j|,\\
	\text{s.t.} &&\quad
	\sum_{j=1}^\infty ({\omega'_j}^{1/2}- A^{1/2}  \omega_j^{1/2} ) \gamma^{E_j^C}\geq 0,\sum_{j=1}^\infty \omega'_j=1,\,
	\sum_{j=1}^\infty \omega_j=1,\nonumber\\
	&&\quad\omega'_j, \omega_j\geq 0 \,\,\,\forall j,~{\rm and}~~\sum_{j=1}^\infty E_j^C \omega_j \leq E ,\nonumber
\end{eqnarray}
where 
\begin{equation}\label{eq:A}
	A = \frac{Z_S}{e^{-\beta E^S_{\rm max}}} 
\end{equation}
and $\gamma = e^{-{\beta}/{2}} < 1$. Furthermore, since $A = 1/\min_i \tau_i $ with $\tau_i$ forming a probability distribution (that of a thermal state), one can deduce that whenever the dimension of system $S$ is $m\geq 2$, $A\geq m \geq 2$ holds as well. 

The solution of this minimization problem serves as a lower bound to the optimal trace distance error. This problem can be relaxed to a convex optimisation problem. We can arrive at a simple bound, however, with rather non-technical means. In essence, we introduce split bounds, so that the optimization can be written as two independent, individually significantly simpler optimization problems. We make use
of the inequality
\begin{equation}
	x^{1/2} - a^{1/2} y^{1/2} \leq |x-y|^{1/2} - f(a)y,
\end{equation}	
which holds true for
$x,y\in[0,1], a\geq 2$ and with $f:\rr^+\rightarrow \rr^+$ defined as	
\begin{equation}\label{eq:f(a)}
	f(a)=\frac{1}{2} \frac{a^2}{a^2+1}. 
\end{equation}
We can then relax the problem by replacing the first constraint in Eq.\ \eqref{eqarray1}, with $x_j$ taking the role of $|\omega_j-\omega'_j|$, to arrive at
\begin{eqnarray}
	 \text{min} \quad && \frac{1}{2}\sum_{j=1}^\infty x_j,\\
	\text{s.t.} \quad && \sum_{j=1}^\infty \left[x_j^{1/2} -   f(A) \omega_j \right] e^{-\beta E_j^C/2}\geq 0, \sum_{j=1}^\infty \omega_j=1,\nonumber\\
	&&x_j, \omega_j\geq 0 \,\,\,\forall j,~{\rm and}~\sum_{j=1}^\infty E_j^C \omega_j \leq E .\nonumber
\end{eqnarray}
These are now two independent optimisation problems, by treating $x_j$ and $\omega_j$ as independent variables. Define $\varepsilon_C$ to be the solution of the simple linear problem involving 
only variables $\{\omega_j\}$, which we explicitly write out in Corollary \ref{cor:varepsC_lb}. In this subproblem, one notes that the constraint on expectation value of the energy implies that the total probability of having relatively low energy eigenvalues cannot be vanishingly small, which we prove in Lemma \ref{lem:EvsW}. One can then use this fact to place a lower bound on the quantity $\varepsilon_C$, which we detail in Corollary \ref{cor:varepsC_lb}.

\begin{lemma}[Lower bound to sums of eigenvalues]\label{lem:EvsW} 
Consider any probability distribution $\lbrace \omega_i \rbrace$ over ascendingly ordered energy eigenvalues $\lbrace E_i^C \rbrace$, with the property 
that the energy eigenvalues are unbounded, i.e. $\lim_{n\rightarrow\infty} E^C_n = \infty$. If the expectation value of energy $\sum_{i=1}^\infty \omega_i E_i^C \leq E$ for some finite constant $E$, 
define for any $0<W<1$
\begin{equation}
	j(W)= \min \biggl\{ j: E_{j+1}^C > \frac{E}{1-W} \biggr\}.
\end{equation}
Then 
\begin{equation}
\displaystyle\sum_{i=1}^{j(W)} \omega_i\geq W.
\end{equation}
\end{lemma}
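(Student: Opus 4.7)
My plan is a short proof by contradiction, essentially a Markov inequality applied to the energy distribution: if too little probability sat on the low-energy block $\{1,\dots,j(W)\}$, the remaining mass would be forced onto levels whose energy exceeds $E/(1-W)$, pushing the expectation value above the allowed budget $E$.

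First, I would note that $j(W)$ is well-defined. Since the eigenvalues $\{E_i^C\}$ are ascendingly ordered and unbounded, the set $\{j : E_{j+1}^C > E/(1-W)\}$ is nonempty for any $0 < W < 1$ (the threshold $E/(1-W)$ is finite), so its minimum is a finite positive integer. I would then suppose, toward a contradiction, that
\begin{equation}
\sum_{i=1}^{j(W)} \omega_i < W,
\end{equation}
which immediately gives $\sum_{i > j(W)} \omega_i > 1 - W$ by normalization.

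Next, I would use the defining property of $j(W)$ together with the ascending ordering of the $E_i^C$ to conclude that for every $i \geq j(W)+1$,
\begin{equation}
E_i^C \;\geq\; E_{j(W)+1}^C \;>\; \frac{E}{1-W}.
\end{equation}
Combining this pointwise bound with the tail mass estimate, I would bound the expectation value from below by discarding the nonnegative contribution from $i \leq j(W)$:
\begin{equation}
\sum_{i=1}^\infty \omega_i E_i^C \;\geq\; \sum_{i > j(W)} \omega_i E_i^C \;>\; \frac{E}{1-W} \sum_{i > j(W)} \omega_i \;>\; \frac{E}{1-W}\cdot(1-W) \;=\; E,
\end{equation}
contradicting the hypothesis $\sum_i \omega_i E_i^C \leq E$. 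Hence the initial assumption fails, and $\sum_{i=1}^{j(W)} \omega_i \geq W$, completing the proof.

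There is no real obstacle here; the only subtlety is making sure the strict inequality from the definition of $j(W)$ combines correctly with the (possibly) strict inequality from the tail mass bound, which it does because one factor is already strict. The unbounded-spectrum assumption is used only to guarantee that $j(W)$ exists as a finite index; the argument itself is a one-line Markov-type bound.
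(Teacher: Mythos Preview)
Your proof is correct and is essentially identical to the paper's own argument: assume $\sum_{i\le j(W)}\omega_i<W$, use normalization to get tail mass exceeding $1-W$, and combine with $E_i^C>E/(1-W)$ for $i>j(W)$ to force the expected energy above $E$. The only additions you make are the explicit check that $j(W)$ is well defined and a remark on the strictness of the inequalities, both of which the paper leaves implicit.
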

\begin{proof}
One can easily prove this by contradiction. Assume that 
\begin{equation}	
	\displaystyle\sum_{i=1}^{j(W)} \omega_i < W 
	\end{equation}
	and therefore $\sum_{i=j(W)+1}^\infty \omega_i > 1-W$. This violates the energy constraint, since 
	\begin{equation}
	\displaystyle\sum_{i=j(W)+1}^\infty \omega_i E_i^C > (1-W)\frac{E}{1-W} = E.
\end{equation}
\end{proof}

\begin{corollary}[Lower bound to $ \varepsilon_C$]\label{cor:varepsC_lb}
For a set of unbounded energy eigenvalues $\lbrace E_i^C \rbrace$, consider the minimization problem
\begin{eqnarray*}
	  \varepsilon_C = \mathrm{min} &&~\sum_{j=1}^\infty \omega_j e^{-\beta E_j^C},\\
	{\rm s.t.} \,\,&&~\sum_{j=1}^\infty \omega_j=1, ~\omega_j\geq 0 ~ \forall j,~{\rm and}~\sum_{j=1}^\infty E_j^C \omega_j \leq E .
\end{eqnarray*}
Denote $\gamma = e^{-\beta} \in (0,1)$. Then for $j(W)= {\rm min} \lbrace j: E_{j+1}^C > {E}/{1-W}\rbrace$,
\begin{eqnarray}
\varepsilon_C \geq \max_{W\in(0,1)} W\gamma^{E_{j(W)}}.
\end{eqnarray}
\end{corollary}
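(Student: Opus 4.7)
The plan is to exhibit, for each $W \in (0,1)$, a feasibility-independent lower bound $W \gamma^{E_{j(W)}^C}$ on the objective, and then take the supremum over $W$. The two ingredients are already in place: Lemma \ref{lem:EvsW} gives a lower bound on a partial sum of the $\omega_j$'s, while the monotonicity $\gamma^{E_i^C} \geq \gamma^{E_{j(W)}^C}$ for $i \leq j(W)$ (since the energies are ordered ascendingly and $\gamma \in (0,1)$) lets us pull a common factor out of that partial sum.

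Concretely, I would fix any feasible $\lbrace \omega_j \rbrace$ and any $W \in (0,1)$. Since the energies are unbounded, $j(W) = \min\lbrace j : E_{j+1}^C > E/(1-W)\rbrace$ is finite. Discarding the nonnegative tail $j > j(W)$ and then using $\gamma^{E_j^C} \geq \gamma^{E_{j(W)}^C}$ for $j \leq j(W)$ gives
\begin{equation}
\sum_{j=1}^\infty \omega_j \gamma^{E_j^C} \;\geq\; \sum_{j=1}^{j(W)} \omega_j \gamma^{E_j^C} \;\geq\; \gamma^{E_{j(W)}^C} \sum_{j=1}^{j(W)} \omega_j \;\geq\; W \gamma^{E_{j(W)}^C},
\end{equation}
where the final inequality is exactly the conclusion of Lemma \ref{lem:EvsW}. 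Since the left-hand side does not depend on $W$ and the right-hand side is a feasible-independent function of $W$, taking the infimum over feasible $\lbrace \omega_j \rbrace$ on the left and the supremum over $W \in (0,1)$ on the right yields $\varepsilon_C \geq \max_{W \in (0,1)} W \gamma^{E_{j(W)}^C}$.

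There is essentially no obstacle here beyond bookkeeping: the corollary is a direct repackaging of Lemma \ref{lem:EvsW} once one observes that $\omega \mapsto \sum_j \omega_j \gamma^{E_j^C}$ is minorised by $\gamma^{E_{j(W)}^C}$ times the cumulative probability up to level $j(W)$. The only point that warrants a brief remark in the write-up is that $j(W)$ is guaranteed to be finite by the unbounded-energy assumption, which is precisely the hypothesis under which Lemma \ref{lem:EvsW} is stated.
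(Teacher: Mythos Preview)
Your proposal is correct and follows essentially the same route as the paper: drop the tail beyond $j(W)$, replace each $\gamma^{E_j^C}$ by $\gamma^{E_{j(W)}^C}$, invoke Lemma~\ref{lem:EvsW} to bound the remaining partial sum by $W$, and then optimise over $W\in(0,1)$. Your write-up is in fact slightly more careful than the paper's, since you spell out why $j(W)$ is finite and why the monotonicity $\gamma^{E_j^C}\geq\gamma^{E_{j(W)}^C}$ holds.
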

\begin{proof}
This is a direct application of Lemma \ref{lem:EvsW}, since the first and second constraints are satisfied automatically by any probability distribution. Given some $W\in(0,1)$, by Lemma \ref{lem:EvsW} we know that $\sum_{i=1}^{j(W)} \omega_i \geq W$. The objective function then can be lower bounded as
\begin{equation}
\sum_{i=1}^\infty \omega_i e^{-\beta E_i} \geq \sum_{i=1}^{j(W)} \omega_i e^{-\beta E^C_{j(W)}} \geq W \gamma^{E^C_{j(W)}},
\end{equation}
for any such $W$. To obtain the best lower bound, one maximizes over all $W\in(0,1)$.
\end{proof}

\begin{remark}[Temperature independence]
The bound obtained in Corollary \ref{cor:varepsC_lb} is dependent on temperature of the bath, and goes to zero in the limit $T\rightarrow 0$.
\end{remark}

We have now solved the subproblem involving variables $\{\omega_i\}$. Inserting the solution into the former optimisation problem,
we arrive at the lower bound for $\varepsilon$,
\begin{eqnarray*}
	 \text{min}~ \frac{1}{2} \sum_{j=1}^\infty x_j\quad\text{s.t.} 
	\sum_{j=1}^\infty x_j^{1/2}  e^{-\beta E_j^C/2}\geq f(A) \varepsilon_C,~x_j\geq 0 \,\,\,\forall j.
\end{eqnarray*}
The optimal solution for this minimization can easily be lower bounded by considering the Lagrange dual, which is 
\begin{eqnarray*}
	 \text{max} ~-\frac{1}{4}  \lambda^2\sum_{j=1}^\infty e^{-\beta E_j^C}  +\lambda f(A) \varepsilon_C,\quad\text{s. t. } & \lambda\geq 0.
\end{eqnarray*}
In fact, this can obviously be immediately solved as a quadratic problem in one variable. Let 
\begin{equation}
	g(\lambda)=\sum_{j=1}^\infty e^{-\beta E_j^C} \lambda^2 +\lambda \varepsilon_C, 
\end{equation}
and consider the stationary point of the function by setting first derivative w.r.t. $\lambda$ to zero,
\begin{equation}
-\frac{1}{2} \lambda \sum_i e^{-\beta E_i} + f(A)\varepsilon_C =0,
\end{equation}
where the second derivative is negative, hence this implies a maximum point. Substituting this into the objective function gives ${f(A)\varepsilon_C^2}/{Z_C}$, and hence we conclude that
\begin{equation*}
	\varepsilon \geq \frac{1}{2}\frac{f(A)^2\varepsilon_C^2}{Z_C}.
\end{equation*}

In this way, we arrive at the main result.
\begin{theorem}[Energy constraint limits the accuracy of thermal catalysis]\label{th:ELB}
Consider the transformation $\omega_C\otimes\tau_S\rightarrow\omega_C'\otimes|E^S_{\rm max}\rangle\langle E^S_{\rm max}|$, where $d_{\rm opt}=\frac{1}{2}\|\omega_C-\omega_C'\|_1= \frac{1}{2}\varepsilon$ is the error induced on the catalyst. Then for all catalyst states with finite average energy, $d_{\rm opt}$ is lower bounded by
\begin{eqnarray*}
	d_{\rm opt} &\geq& \frac{1}{2} \frac{f(A)^2\varepsilon_C^2}{Z_C},
\end{eqnarray*}
where $f(x)$ is defined in Eq.~\eqref{eq:f(a)}, $A = {Z_S}/{e^{-\beta E^S_{\rm max}}} $, $\varepsilon_C  = \max_{W\in(0,1)} W\gamma^{E^C_{j(W)}}$ and $j(W) = \min \lbrace j: E^C_{j+1} > {E}/(1-W)\rbrace$.\\
In other words, thermal embezzling of diagonal states with arbitrary accuracy is not possible.
\end{theorem}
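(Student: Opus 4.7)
The plan is to combine the monotonicity constraint from a single well-chosen Rényi divergence with the energy constraint on the catalyst, then decouple the resulting non-convex problem into two much simpler ones by a pointwise inequality. First I would invoke Lemma~\ref{lem:secondlaw} to replace the full thermo-majorization requirement by the single necessary condition $D_{1/2}(\omega_C\otimes\tau_S\|\tau_{CS})\geq D_{1/2}(\omega_C'\otimes\Pi_{\max}^S\|\tau_{CS})$, which, using additivity under tensor products and the explicit form of $D_{1/2}$, can be rewritten as $\sum_j \omega_j'^{1/2}\gamma^{E_j^C}\geq A^{1/2}\sum_j \omega_j^{1/2}\gamma^{E_j^C}$ with $A$ and $\gamma$ as in Eq.~\eqref{eq:A}. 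Since the normalization and positivity constraints are automatic in the argument below, the trace-distance minimization over diagonal catalysts reduces to the relaxed problem stated in Eq.~\eqref{eqarray1}.

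The key step is to decouple the variables $\{\omega_j\}$ and $\{\omega_j'\}$. For this I would verify the elementary inequality $x^{1/2}-a^{1/2}y^{1/2}\leq |x-y|^{1/2}-f(a)y$ for $x,y\in[0,1]$ and $a\geq 2$, with $f$ as in Eq.~\eqref{eq:f(a)}, using only calculus on one variable after fixing $a$ and $y$. Substituting $x_j=|\omega_j-\omega_j'|$ and treating $x_j,\omega_j$ as independent variables produces a strictly weaker constraint, hence a legitimate lower bound for $\varepsilon$. The objective now factorizes: the constraint reads $\sum_j x_j^{1/2}\gamma^{E_j^C}\geq f(A)\sum_j \omega_j\,\gamma^{2E_j^C}$ (after absorbing the exponential factor into the $\omega$-piece), so one may independently minimize $\sum_j \omega_j \gamma^{2E_j^C}$ subject to $\sum_j\omega_j=1$ and $\sum_j E_j^C\omega_j\leq E$, and then minimize $\sum_j x_j$ subject to a scalar lower bound on $\sum_j x_j^{1/2}\gamma^{E_j^C}$.

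For the $\omega$-subproblem I would apply Corollary~\ref{cor:varepsC_lb}, which, via the Chebyshev-style Lemma~\ref{lem:EvsW}, yields $\sum_j\omega_j\gamma^{2E_j^C}\geq \varepsilon_C=\max_{W\in(0,1)} W\gamma^{E^C_{j(W)}}$ where $j(W)$ is the largest index with $E^C_{j+1}\leq E/(1-W)$: the finite mean energy forces a nontrivial amount of probability to live on a finite initial segment of the spectrum, on which $\gamma^{2E_j^C}$ is bounded below. Feeding this into the remaining problem, I now need $\min \tfrac{1}{2}\sum_j x_j$ subject to $\sum_j x_j^{1/2}\gamma^{E_j^C}\geq f(A)\varepsilon_C$, which is a quadratic program in the variables $\sqrt{x_j}$. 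Forming the Lagrangian and optimizing over $x_j\geq 0$ for fixed dual $\lambda\geq 0$ gives $x_j=\tfrac{1}{4}\lambda^2\gamma^{2E_j^C}$, so the dual reduces to $\max_{\lambda\geq 0}\bigl[-\tfrac{1}{4}\lambda^2 Z_C+\lambda f(A)\varepsilon_C\bigr]$, a one-variable concave quadratic. Its maximum is attained at $\lambda^{*}=2f(A)\varepsilon_C/Z_C$ and equals $f(A)^2\varepsilon_C^2/Z_C$, giving $d_{\rm opt}=\tfrac{1}{2}\varepsilon\geq \tfrac{1}{2}f(A)^2\varepsilon_C^2/Z_C$ as claimed.

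The main obstacle I anticipate is the pointwise split inequality: one must find a function $f$ that is simultaneously large enough to make the final bound nontrivial (for $A\geq 2$) and small enough that the inequality holds uniformly for $x,y\in[0,1]$; checking that $f(a)=\tfrac{1}{2}a^2/(a^2+1)$ works amounts to a one-variable calculus exercise after fixing $y$ and optimizing over $x$. All subsequent steps are either relaxations (automatically preserving the lower-bound direction) or tight optimizations of convex/quadratic programs, so the structural work is concentrated in this single inequality and in verifying that $Z_C=\sum_j e^{-\beta E_j^C}<\infty$ makes the dual finite.
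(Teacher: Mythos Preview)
Your proposal follows essentially the same route as the paper: relax to the single $D_{1/2}$ constraint, apply the pointwise split inequality $x^{1/2}-a^{1/2}y^{1/2}\leq |x-y|^{1/2}-f(a)y$ to decouple $\{\omega_j\}$ from $\{x_j\}$, lower bound the $\omega$-subproblem via Lemma~\ref{lem:EvsW}/Corollary~\ref{cor:varepsC_lb}, and solve the remaining quadratic $x$-problem by Lagrangian duality. One small slip: the split inequality produces a term $-f(A)\omega_j$ (linear in $\omega_j$), so after multiplying by $\gamma^{E_j^C}$ the $\omega$-side carries weight $\gamma^{E_j^C}=e^{-\beta E_j^C/2}$, not $\gamma^{2E_j^C}$---there is no ``exponential factor to absorb''---though your version is still a valid further relaxation (since $\gamma^{2E_j^C}\leq\gamma^{E_j^C}$), and the paper's own exposition is itself not fully consistent on this exponent between the relaxed constraint and Corollary~\ref{cor:varepsC_lb}.
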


\subsection{Arbitrary states}
Similar to our previous discussions in Section \ref{subsec:thermalcat_arb} , when the states $\rho_S$ or $\rho_S'$ are non-diagonal, we can still obtain a state dependent lower bound for the embezzling error. For any state $\rho_S,~\rho_S'$, let us define the quantity
\begin{equation}
\kappa_2 (\rho_S,\rho_S'):=D_{1/2}(\rho_S'\|\tau_S)-D_{1/2}(\rho_S\|\tau_S).
\end{equation}
Then a lower bound can be obtained by following the steps as proved in Section \ref{subsec:diag_ELB}, only now replacing the constant $A$ defined in Eq.~\eqref{eq:A} with a state-dependant function. 

\begin{lemma}
For arbitrary states $\rho_S$ and $\rho_S'$, consider the transformation $\omega_C\otimes\rho_S\rightarrow\omega_C'\otimes\rho_S'$, where $d_{\rm opt}=\frac{1}{2}\|\omega_C-\omega_C'\|_1= \frac{1}{2}\varepsilon$ is the error induced on the catalyst. Then for all catalyst states with finite average energy, $d_{\rm opt}$ is lower bounded by
\begin{eqnarray*}
	d_{\rm opt} &\geq& \frac{1}{2} \frac{f(2^{\kappa_2(\rho_S,\rho_S')})^2\varepsilon_C^2}{Z_C},
\end{eqnarray*}
where $f(x)$ is defined in Eq.~\eqref{eq:f(a)}, $\kappa_2 (\rho_S,\rho_S')=D_{1/2}(\rho_S'\|\tau_S)-D_{1/2}(\rho_S\|\tau_S)$, $\varepsilon_C  = \max_{W\in(0,1)} W\gamma^{E^C_{j(W)}}$ and $j(W) = \min \lbrace j: E^C_{j+1} > {E}/{1-W}\rbrace$. This implies that thermal embezzling with arbitrary accuracy using a diagonal catalyst is not possible.
\end{lemma}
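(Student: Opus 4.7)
The plan is to reduce to the diagonal catalyst argument of Theorem~\ref{th:ELB} essentially verbatim, the only substantive modification being that the scalar $A = Z_S / e^{-\beta E_{\rm max}^S}$ arising from the specific transition $\tau_S \to \Pi_{\max}^S$ is replaced by the state-dependent factor $2^{\kappa_2(\rho_S,\rho_S')}$. Lemma~\ref{lem_err_catalysis_arb} already carried out the analogous replacement when the thermal monotone used was $D_\infty$; here the job is to repeat the manoeuvre for $D_{1/2}$.

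First, invoke Lemma~\ref{lem:secondlaw} with $\alpha = 1/2$: the assumed transformation $\omega_C \otimes \rho_S \to \omega_C' \otimes \rho_S'$ implies
\begin{equation}
D_{1/2}(\omega_C \otimes \rho_S \| \tau_{CS}) \geq D_{1/2}(\omega_C' \otimes \rho_S' \| \tau_{CS}).
\end{equation}
Using additivity of the R\'enyi divergences under tensor products together with $\tau_{CS} = \tau_C \otimes \tau_S$, this rearranges to
\begin{equation}
D_{1/2}(\omega_C \| \tau_C) - D_{1/2}(\omega_C' \| \tau_C) \geq \kappa_2(\rho_S, \rho_S').
\end{equation}
Because the lemma restricts to diagonal catalysts, I can use the classical expression $D_{1/2}(\omega_C\|\tau_C) = -2\log \sum_i \sqrt{\omega_i \tau_i}$ with $\sqrt{\tau_i} = \gamma^{E_i^C}/\sqrt{Z_C}$. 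The divergence gap thus becomes
\begin{equation}
\sum_j \left(\sqrt{\omega_j'} - 2^{\kappa_2(\rho_S,\rho_S')/2} \sqrt{\omega_j}\right) \gamma^{E_j^C} \geq 0,
\end{equation}
which is precisely the first constraint of Eq.~\eqref{eqarray1} under the substitution $A \mapsto 2^{\kappa_2(\rho_S,\rho_S')}$. The normalisation, non-negativity, and mean energy constraint $\sum_j E_j^C \omega_j \leq E$ all transfer unchanged.

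With the relaxed feasibility condition placed in exactly the same algebraic form as in Section~\ref{subsec:diag_ELB}, I would then run the three-step machinery without alteration: apply the splitting inequality $x^{1/2} - a^{1/2} y^{1/2} \leq |x-y|^{1/2} - f(a) y$ with $a = 2^{\kappa_2(\rho_S,\rho_S')}$ to decouple the problem into the $\omega_j$-part and the $x_j$-part with $x_j = |\omega_j - \omega_j'|$; lower bound the $\omega_j$-subproblem by $\varepsilon_C$ via Lemma~\ref{lem:EvsW} and Corollary~\ref{cor:varepsC_lb}, which is insensitive to $\kappa_2$; and finally solve the remaining quadratic optimisation in $\sqrt{x_j}$ via its one-variable Lagrange dual
\begin{equation}
\max_{\lambda \geq 0}\left[-\tfrac{1}{4}\lambda^2 Z_C + \lambda\, f(2^{\kappa_2(\rho_S,\rho_S')}) \varepsilon_C\right],
\end{equation}
whose maximum is $f(2^{\kappa_2(\rho_S,\rho_S')})^2 \varepsilon_C^2 / Z_C$. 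Inserting the prefactor $1/2$ gives the claimed bound on $d_{\rm opt}$.

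The main obstacle is the range of validity of the splitting inequality. In Section~\ref{subsec:diag_ELB} it was invoked for $a \geq 2$, a condition enforced automatically by $A \geq m \geq 2$. Here the analogous condition is $2^{\kappa_2(\rho_S,\rho_S')} \geq 2$, i.e.\ $\kappa_2 \geq 1$, which is not guaranteed and indeed fails when $\rho_S$ is already far from the thermal state (so that the divergence gap to be bridged is modest). Addressing this requires either extending the splitting inequality to the regime $a \in (1,2)$ with a suitably modified $f$, or simply noting that the inequality of the lemma is a formal statement that degenerates when $\kappa_2$ is too small --- a degeneration that is physically natural, since in that regime the transformation $\rho_S \to \rho_S'$ places only weak demands on the catalyst and no strong lower bound on the catalytic error should be expected.
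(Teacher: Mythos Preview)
Your proposal is correct and mirrors the paper's own argument essentially verbatim: the paper simply states that ``a lower bound can be obtained by following the steps as proved in Section~\ref{subsec:diag_ELB}, only now replacing the constant $A$ defined in Eq.~\eqref{eq:A} with a state-dependent function,'' which is precisely your reduction. Your observation about the splitting inequality's hypothesis $a\geq 2$ (i.e.\ $\kappa_2\geq 1$) is a genuine caveat that the paper does not address; the paper implicitly assumes this regime, and your closing remark that the bound degenerates naturally when $\kappa_2$ is small is the right physical reading.
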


\end{document}